\begin{document}
\title{Gathering of asynchronous robots on circle with limited visibility using finite communication}
\titlerunning{Gathering robots on circle with limited visibility using finite communication}

\author{Avisek Sharma\inst{1}\orcidID{0000-0001-8940-392X} \and
Satakshi Ghosh\inst{2}\orcidID{0000-0003-1747-4037} \and
Buddhadeb Sau\inst{1}\orcidID{0000-0001-7008-6135}}
\authorrunning{A Sharma, S Ghosh, and Buddhadeb Sau}

\institute{Department of Mathematics, Jadavpur University, Kolkata, West Bengal, India 
\and 
Department of Basic Science and Humanities, International Institute of Information Technology, Bhubaneswar, Odisha, India\\
\email{aviseks.math.rs@jadavpuruniversity.in, satakshi.g@snuniv.ac.in, buddhadeb.sau@jadavpuruniversity.in}}
\maketitle              % typeset the header of the contribution
\begin{abstract}
This work deals with gathering problem for a set of autonomous, anonymous and homogeneous robots with limited visibility operating in on a continuous circle. The robots are initially placed at distinct positions forming a rotationally asymmetric configuration. The robots agree on the clockwise direction. In $\theta$-visibility model, a robot can only see those robots on the circle that are at an angular distance $<\theta$ from it. Di Luna \textit{et. al.} [DISC'20] have shown that, in $\pi/2$ visibility, gathering is impossible. Also, they provided an algorithm for robots with $\pi$ visibility, operating under a semi-synchronous scheduler. In the $\pi$ visibility model, only one point, the point at the angular distance $\pi$ is removed from the visibility. Ghosh \textit{et. al.} [SSS'23] provided a gathering algorithm for $\pi$ visibility model with robot having finite memory ($\mathcal{FSTA}$), operating under a special asynchronous scheduler.

If the robots can see all points on the circle, then the gathering can be done by electing a leader in weakest robot model under a fully asynchronous scheduler. But in the previous works, it was demonstrated that even removal of one point from the visibility makes gathering challenging. In both works, the robots had rigid movement. In this work, we propose an algorithm that solves the gathering problem under $\pi$-visibility model for robots that have finite communication ability ($\mathcal{FCOM}$). In this work the robot movement is non-rigid and the robots work under a fully asynchronous scheduler. 

\keywords{Limited visibility  \and Finite communication \and Asynchronous robots \and Continuous circle \and Distributed algorithms}
\end{abstract}

\section{Introduction}
Swarm robotics has been well-studied over the past two decades in distributed computing. Here a collection of simple and inexpensive robots (mobile computing units) collaboratively perform a task. In this line of research, we investigate to find the minimal capability required for the robots to complete a certain task. Generally robots are considered as dimensionless computational entities on the euclidean plane. They are expected to complete a given task by coordinating with each other. The robots operates through Look-Compute-Move (LCM) cycles. On activation a robot takes a snapshot of its surrounding in its vicinity (Look). Then taking information in the snapshot it runs an inbuilt algorithm and outputs a position to move (Compute). Next, it moves to that computed position (Move). The robots are generally autonomous (having no central control), anonymous (no unique identifier), identical (physically indistinguishable) and homogeneous (all  robots run the same distributed algorithm). Depending upon the time of activation there are different schedulers. In the fully synchronous setting (\textsc{FSync}), the time is divided into equal rounds and simultaneously executes the phases of their LCM cycles. The semi-synchronous model (\textsc{SSync}) is same as the \textsc{FSync} model, except for the fact that not all robots are necessarily activated in each round. The most general model is the fully asynchronous model (\textsc{ASync}) where there are no assumptions regarding the synchronization and duration of the robot actions.

There are different robots models depending on the capabilities of robots. The weakest robot model is $\mathcal{OBLOT}$. In this model, the robots are oblivious, i.e., robots have no persistent memory to remember its past actions or past configurations, and robots are silent, i.e., robots have no explicit communication ability. Next, robots may be equipped with a persistent light that can take colors from a predefined pallet consisting of a finite number of colors. If this light is visible to itself then it serves as a finite memory. This model is called $\mathcal{FSTA}$ (\cite{FLOCCHINI2016}). If this light is visible only to other robots, then this serves as a communication architecture. This model is called $\mathcal{FCOM}$ (\cite{OKUMURA2023114198}). If this light is visible to all robots including itself then the model is called $\mathcal{LUMI}$ (\cite{DAS2016}). In $\mathcal{LUMI}$ model, the robots have finite memory and also have communication ability.

In this work, we are concerned with the most fundamental coordination task, Gathering (\cite{AOSY1999,TERAI2023241}). Here, the robots are asked to meet at a point, not known beforehand. In general robots have full visibility, but in practice we may not have this luxury due to hardware limitations. In the limited visibility model (\cite{AOSY1999,FPSW2001,FLOCCHINI2005,PS2021}), a robot can see only those robots that are at a distance less than a constant $R>0$. The $R$ is called the visibility radius of the robots. In this work, the robots are operating in a continuous circle \cite{DILUNA2025}. Initially, the robots are placed at distinct positions of a circle. All robots agree on the clockwise direction. The initially forming configuration by the robots are rotationally asymmetric and has no multiplicity point, i.e., there is no location on the circle that is occupied by more than one robots. A configuration is rotationally asymmetric, means no non-trivial rotation with respect to the center will keep the configuration unchanged. In $\theta$-visibility model, the robots can see only those robots that have angular distance less than $\theta$ from it (See Figure~\ref{fig:vis}).

\begin{figure}[ht!]
    \centering
    \includegraphics[width=0.5\linewidth]{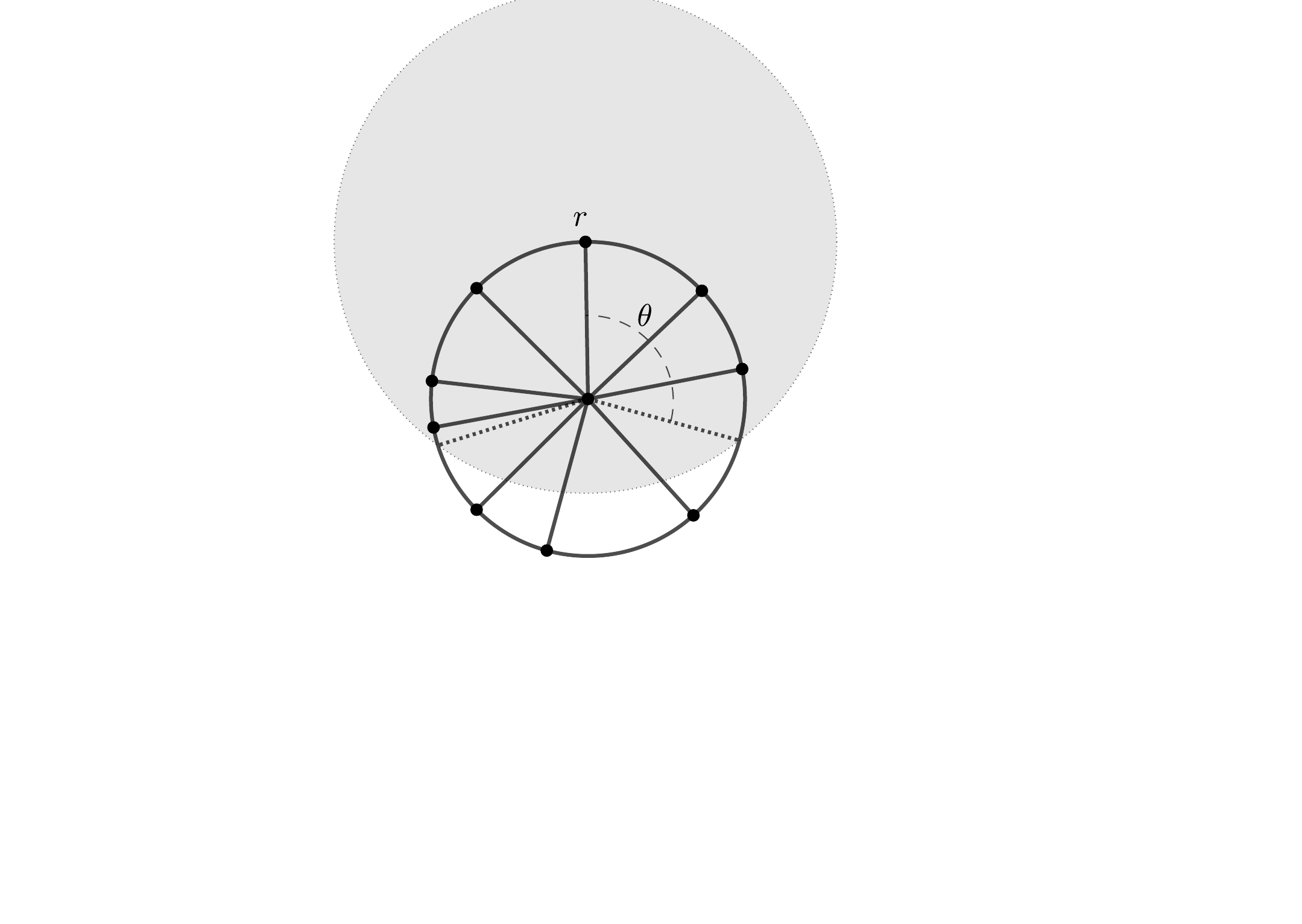}
    \caption{Visibility for robot $r$: black solid circles denote robots. Robot $r$ can see the robots that are at an angular distance $\theta$ from it.}
    \label{fig:vis}
\end{figure}

\section{Related work and our contribution}

Gathering problem by a group of robots has been studied extensively in different terrains and models (See the survey in \cite{Flocchini2019book}). Here we shall restrict our discussion to the continuous terrain when robots have limited visibility. In \cite{AOSY1999}, Ando \textit{et. al.} first time considered gathering in plane by robots with limited visibility. But there is an lighter version of gathering, called \textit{Point Convergence} which has been studied in the literature. The point convergence problem asks to design a distributed algorithm such that for any given $\epsilon>0$ there is a time $t_{\epsilon}$ such that for any time $t\ge t_{\epsilon}$ distance between any pair of robots is less than $\epsilon$. In works \cite{degener2011tight,DILUNA2025,FPSW2001,FLOCCHINI2005,GSGS2023,KB2011,kirkpatrick2024power,PS2021,souissi2009}, authors considered gathering under limited visibility and in the works \cite{AOSY1999,KB2011,kirkpatrick2024power}, authors considered point convergence with limited visibility. In \cite{DILUNA2025,GSGS2023} gathering on continuous circle has been considered by robots with limited visibility. In \cite{MONDAL2024}, Mondal \textit{et. al.} considered arbitrary pattern formation problem on continuous circle.

Next, we only focus into the details of the previous works where the gathering on continuous circle has been considered. In \cite{DILUNA2025}, Di Luna \textit{et. al.} considered first the gathering of robots on a continuous circle with limited visibility. They considered $\mathcal{OBLOT}$ robot model. They showed that rotationally asymmetric configuration at the initial is necessary. They first shown that if robots have $\pi/2$ visibility then the gathering is impossible. Then they proposed an algorithm to gather robots if robots have $\pi$ visibility. The $\pi$ visibility is a minimal limitation because it only removes one point from the visibility, the point on the circle at the angular distance $\pi$. We call the position at the angular distance $\pi$ from a robot as \textit{antipodal} position of it. While in full visibility, there is a gathering algorithm by electing a unique leader in a rotationally asymmetric configuration. Next, since robots are assumed to have multiplicity detection ability, so formation of one multiplicity is almost solves the problem. But removal of one point from the visibility, makes it challenging. First, election of one unique leader is not possible and also there is a chance of formation of two multiplicity points at antipodal positions. In \cite{DILUNA2025}, authors considered robots operating under semi-synchronous scheduler. Also, the they assumed the robots movements are rigid, i.e., when a robot decides to move at a point then adversary cannot stop its movement midway. They allow a robot to cross another robot while moving.

Next, in \cite{GSGS2023} Ghosh \textit{et. al.} attempted to solve it under the same model but in asynchronous scheduler. But authors could solve it for a special type of asynchronous scheduler. In addition, the authors traded synchrony with memory. They used the $\mathcal{FSTA}$ robot model. In both works, the rigidity of robots' movement remains the crucial and necessary assumption. So, in a practical scenario where maintaining rigidity is challenging, both of the previous solutions do not work. 

In this work, the authors have proposed a gathering algorithm under a fully asynchronous scheduler. Here, authors trade synchrony with communication. The robot model considered here is $\mathcal{FCOM}$. In addition, the authors removed the assumption of rigidity. The robots' movements are non-rigid. Since $\mathcal{LUMI}$ is an equally or more strong model than $\mathcal{FCOM}$, it also gives a solution for the $\mathcal{LUMI}$ model. Furthermore, to validate our contribution from the work in \cite{GSGS2023}, where the $\mathcal{FSTA}$ model has been used, in Section~\ref{sec:comp}, we have shown that for non-rigid movements $\mathcal{FCOM}$ is not a stronger model than $\mathcal{FSTA}$\footnote{See \cite{flocchiniOPODIS2023} for better understanding of model comparisons.}. We have defined a problem in the same terrain (continuous circle) and same visibility model that is solvable in model $\mathcal{FSTA}$ but not solvable in $\mathcal{FCOM}$.

\paragraph{\bf Our Contribution:} 
In this work, a set of robots are initially deployed at distinct locations on a continuous circle, initially forming a rotationally asymmetric configuration. The robots can freely move on the perimeter of the circle. The robots cannot see all points of the circle. The robots are operating under the $\pi$ visibility model, i.e., each robot cannot see the point at an angular distance $\pi$ from it. The robots agree on the clockwise direction. Robots have a weak multiplicity detection ability, i.e., from a snapshot, they can detect a multiplicity point but cannot count the number robots occupying that point. The robot model is $\mathcal{FCOM}$, i.e., each robot is equipped with a persistent external light that can take colors from a predefined palette consisting of finitely many colors. The robot's movements are non-rigid, i.e., a robot's movement can be stopped by an adversary midway. Although, for the sake of fairness, every time a robot attempts to move, it must move a constant distance $\delta>0$ towards its destination location. The robots are operating under a fully asynchronous scheduler. In addition, to justify our contribution, we prove that model $\mathcal{FCOM}$ is not as strong as $\mathcal{FSTA}$ (a model used in previous work) when robot movements are non-rigid. This work first proposes a gathering algorithm for the $\pi$ visibility model under a fully asynchronous scheduler and with non-rigid robot movements. The Table~\ref{tab:comp} compares our contribution with the other relevant works.

\begin{table}[ht!]
    \centering
    \footnotesize
    \caption{Comparison table}
    \label{tab:comp}
    \begin{tabular}{|c|p{1.5cm}|p{3cm}|c|}
    \hline
      Work   & Model & Scheduler  & Rigidity\\
    \hline
       Di Luna \textit{et. al.} \cite{DILUNA2025} & $\mathcal{OBLOT}$  & Semi-synchronous & Yes \\
        \hline
        Ghosh \textit{et. al.} \cite{GSGS2023} & $\mathcal{FSTA}$  & A special asynchronous scheduler & Yes\\
        \hline
        This work & $\mathcal{FCOM}$ & Fully asynchronous Scheduler & No\\ 
        \hline
    \end{tabular}
    
\end{table}

\section{Model definition and preliminaries}
Let a set of $k$ robots be initially placed on a circle $C$ on euclidean plane. The robots looks identical from outside. The robots are autonomous and homogeneous. All robots are governed by the same algorithm. The robots agree on the clockwise direction. The robots can freely move on the circle. To widen the scope of the applicability, we assume that the robots cannot cross each other on the circle. If a robot meets another robot in the middle of its move, its movement is paused until it is activated again. The robots movements are not rigid. That is, an adversary can stop a robot's movement at any point. But for the sake of fairness, we assume that once a robot starts moving it will at least make a constant $\delta>0$ amount movement towards its computed destination. Each robot is equipped with an external light that can take finitely many predefined colors from the palate $P$. The colors of the light is persistent, that i.e., it remains unchanged unless it is changed by the robot according to the algorithm.

We assume that initially each robot is occupying a distinct position on $C$. Let $a$ and $b$ be two distinct points on $C$. The clockwise angular distance between $a$ and $b$, denoted as $\alpha(a,b)$ is the angle subtended at the center of $C$ by while traversing from $a$ to $b$ on $C$ in clockwise direction. Similarly, $\overline{\alpha}(a,b)$ ($=2\pi-\alpha(a,b)$) denotes the counterclockwise angular distance between $a$ and $b$. For two robots $r_1$ and $r_2$ located on $C$ at points $a$ and $b$ respectively, we denote the clockwise (resp, anticlockwise) angular distance between them as $\alpha(r_1,r_2)$ (resp, $\overline{\alpha}(r_1,r_2)$) which is equal to $\alpha(a,b)$ (resp, $\overline{\alpha}(a,b)$). 

We assume the robots have limited visibility. Each robot can see all robots at an angular distance $<\pi$. The point at the angular distance $\pi$ is not visible to a robot. The robots operate through Look-Compute-Move cycles. Upon activation in the look phase, the robots scan the visible part of the circle and obtain the position and color of the light of the visibility. In compute phase, the robots run the inbuilt algorithm and obtain a position and a color $c$ from the palate $P$. In the move phase first it sets the color of its light at $c$ and then move the the computed position.

The robots are operating under a fully asynchronous scheduler. The robots activate independently and execute their phases of LCM cycle independently. The duration of the phases can take unbounded amount of time. For a robot the duration of the phases can be different from other robots and also from its own earlier cycle. 

A finite set ordered pairs $(p,\omega(p))$ is said to be a \textit{configuration of robots} (or, simply a \textit{configuration}) where the first quadrant of each pair $p$ is a unique point on $C$ occupied by a robot and $\omega(p)=\{1,mult\}$. The $\omega(p)=1$ denotes that $p$ is occupied by only one robot and $\omega(p)=mult$ denotes that $p$ is occupied by more than one robot. A point $p$ having $\omega(p)=mult$ is said to be a \textit{multiplicity} point. If a configuration has no multiplicity point, then we can ignore the second quadrant of the members of $\mathcal{C}$. For such a case, a configuration is equivalent to a finite set of points on $C$.

In a snapshot of a robot $r$, it receives ordered pairs $(p,\omega(p))$, where $p$ is a point visible to $r$ occupied by a robot. Thus, robots have weak multiplicity detection ability, i.e., a robot can detect whether a point $p$ is occupied by one robot or more than one robot from the value of $\omega(p)$, but it cannot count the number of robots occupying a multiplicity point.

Let $\mathcal C$ be a configuration with no multiplicity point. Let $S$ be a finite set of points on $C$ occupied by the robots in $\mathcal C$. The $\mathcal C$ is said to be \textit{rotational symmetric} if there is a nontrivial rotation of points of $S$ with respect to the center which leaves the configuration unchanged. A configuration with no multiplicity point is said to be \textit{rotationally asymmetric} if it is not rotationally symmetric. A robot $r$ is said to be an antipodal robot if there exists a robot $r'$ on the angular distance $\pi$ of the robot. In such a case, $r$ and $r'$ are said to be \textit{antipodal} robots to each other.

Let $r$ be a robot in a given configuration with no multiplicity point and let $r_1, r_2,\dots, r_n$ be the other robots on the circle in clockwise order. Then the angular sequence for robot $r$ is the sequence $(\alpha(r,r_1),\alpha(r_1,r_2),\alpha(r_2,r_3),\dots,\alpha(r_n,r))$. We denote this sequence as $\mathcal{S}(r)$. Further, $\mathcal{S}(r,r_i)$ denote the following subsequence of $\mathcal{S}(r)$: $(\alpha(r,r_1),\alpha(r_1,r_2),\alpha(r_2,r_3),\dots,\alpha(r_{i-1},r_i))$. Further, we call $\alpha(r,r_1)$ as the leading angle of $r$. We denote the leading angle of a robot $r$ as $\lambda(r)$.

Note that as the configuration is initially rotationally asymmetric, by results from the paper \cite{DILUNA2025} we can say that all the robots have distinct angle sequences.
\begin{definition}[Lexicographic Ordering]
Let $\tilde{\alpha} = (\alpha_1,\dots,\alpha_n)$ and $\tilde{\beta}=$ $(\beta_1,\dots,\beta_n)$ be two finite sequences of reals of same length. Then $\tilde{\alpha}$ is said to be lexicographically strictly smaller sequence than $\tilde{\beta}$ if $\alpha_1<\beta_1$ or there exists $1<k<n$ such that $\alpha_i=\beta_i$ for all $i=1,2,\dots,k$ and $\alpha_{k+1}<\beta_{k+1}$. $\tilde{\alpha}$ is said to be lexicographically smaller sequence than $\tilde{\beta}$ if either $\tilde{\alpha}=\tilde{\beta}$ or $\tilde{\alpha}$ is lexicographically strictly smaller sequence than $\tilde{\beta}$.
\end{definition}
\begin{definition}[True leader]
In a configuration with no multiplicity point, a robot with lexicographically smallest angular sequence is called a true leader.
\end{definition}
If the configuration is rotationally asymmetric and contains no multiplicity point, there exists exactly one robot which has strictly the smallest lexicographic angle sequence. Hence there is only one true leader for such a configuration.
Since a robot on the circle cannot see whether its antipodal position is occupied by a robot or not. So a robot can assume two things: 1)~the antipodal position is empty, let's call this configuration $\mathcal C_0(r)$ 2)~the antipodal position is nonempty, let's call this configuration $\mathcal C_1(r)$. So a robot $r$ can form two angular sequences. One considering $C_0(r)$ configuration and another considering $C_1(r)$. The next two definitions are from the viewpoint of a robot. If the true leader robot can confirm itself as the true leader, we call it a \textit{cognizant} leader. If the true leader or some other robot has an ambiguity of being a true leader depending on the possibility of $\mathcal C_0$ or $\mathcal C_1$ configuration, then we call it \textit{undecided} leader. For a robot $r$, there may be the following possibilities. 
\begin{itemize}
    \item Possibility-1: $\mathcal C_0(r)$ configuration has rotational symmetry, so $\mathcal C_1(r)$ is the only possible configuration.
    \item Possibility-2: $\mathcal C_1(r)$ configuration has rotational symmetry, so $\mathcal C_0(r)$ is the only possible configuration.
    \item Possibility-3: Both $\mathcal C_0(r)$ and $\mathcal C_1(r)$ has no rotational symmetry, so both $\mathcal C_0(r)$ and $\mathcal C_1(r)$ can be possible configurations.
\end{itemize}
\begin{definition}[Cognizant leader]
A robot $r$ in a rotationally asymmetric configuration with no multiplicity point is called a cognizant leader if $r$ is the true leader in any possible configurations.
\end{definition}

Note that, the cognizant leader is definitely the true leader of the configuration. Thus, if the configuration is asymmetric and contains no multiplicity point, there is at most one cognizant leader.

\begin{definition}[Undecided leader]
A robot $r$ in a rotationally asymmetric configuration with no multiplicity point is called an undecided leader if both $C_0(r)$ and $C_1(r)$ are possible configurations and $r$ is a true leader in one configuration but not in another.
\end{definition}
\begin{definition}[Follower robot]
A robot in an asymmetric configuration with no multiplicity point is said to be a follower robot if it is neither a cognizant leader nor an undecided leader.
\end{definition}
\begin{definition}[Expected leader]
A robot in an asymmetric configuration with no multiplicity point is said to be an expected leader if it is not a follower robot. That is, an expected leader is either a cognizant leader or an undecided leader.
\end{definition}
Note that the above definitions are set in such a way that the cognizant leader (or, an undecided leader or, a follower robot) can recognize itself as the cognizant leader (or, an undecided leader or, a follower robot).  
\begin{definition}
For two robots $r$ and $r'$ situated at different positions on the circle with $\alpha(r,r')=\theta$, we define $[r,r']$ as the set of points $x$ on the circle such that  $0\le \alpha(r,x) \le \theta$ and $(r,r')$ as the set of points $x$ on the circle such that  $0 < \alpha(r,x) < \theta$.
\end{definition}
\begin{definition}
Let $r$ be a robot in a configuration, then a robot $r_1$ is said to be situated at the left of $r$ if $\alpha(r,r_1)>\pi$ and said to be at right if $\alpha(r,r_1)<\pi$.
\end{definition}
Note that, the true leader of the configuration can become an undecided leader and a robot other than the true leader can also become an undecided leader. The next results lead us to find the maximum possible number of expected leaders in a rotationally asymmetric configuration with no multiplicity points.
\begin{proposition}\label{lemma1}
Let $(\alpha_1,\dots,\alpha_k)$ be the angular sequence of the true leader, say $r_0$, in a rotationally asymmetric configuration with no multiplicity point. Then there cannot be another robot $r'$ with the following properties.
\begin{enumerate}
    \item $r'$ is at left side to $r_0$,
    \item $\mathcal{S}(r',r_0)=(\alpha_1,\alpha_2,\dots,\alpha_i)$.
  
\end{enumerate}
\end{proposition}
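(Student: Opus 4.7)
The plan is to argue by contradiction. Suppose such an $r'$ exists; since $r'$ is distinct from $r_0$, I would label it $r' = r_j$, where $r_1, \ldots, r_{k-1}$ are the other robots in clockwise order from $r_0$. Condition~2 then fixes the length of $\mathcal{S}(r', r_0)$ at $i = k - j$ and rewrites as
\begin{equation*}
\alpha_{j+l} = \alpha_l, \qquad l = 1, \ldots, i. \qquad (\ast)
\end{equation*}

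Applying $(\ast)$, the three relevant angular sequences become
\begin{align*}
\mathcal{S}(r_0) &= (\alpha_1, \ldots, \alpha_i, \alpha_{i+1}, \ldots, \alpha_k), \\
\mathcal{S}(r') &= (\alpha_1, \ldots, \alpha_i, \alpha_1, \ldots, \alpha_j), \\
\mathcal{S}(r_i) &= (\alpha_{i+1}, \ldots, \alpha_k, \alpha_1, \ldots, \alpha_i),
\end{align*}
where $r_i$ denotes the robot at $i$ clockwise steps from $r_0$. Observe that $\mathcal{S}(r_0)$ and $\mathcal{S}(r')$ already coincide on the first $i$ entries, while $\mathcal{S}(r_0)$ and $\mathcal{S}(r_i)$ coincide on the last $i$ entries.

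The crux would be an induction on $l = 1, \ldots, j$ proving $\alpha_{i+l} = \alpha_l$. Assuming the equality for all $m < l$, the lexicographic comparison $\mathcal{S}(r_0) < \mathcal{S}(r')$ (strict because the configuration is rotationally asymmetric and $r_0$ has the lex-smallest angular sequence) pivots at position $i + l$ and forces $\alpha_{i+l} \le \alpha_l$; simultaneously, $\mathcal{S}(r_0) < \mathcal{S}(r_i)$ pivots at position $l$ and forces $\alpha_l \le \alpha_{i+l}$. Equality follows.

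Carrying the induction through $l = j$ makes $\mathcal{S}(r_0)$ agree with $\mathcal{S}(r')$ in every coordinate, contradicting the fact (noted just before the definition of the true leader in the excerpt) that in a rotationally asymmetric configuration distinct robots have distinct angular sequences. The hard part will be the cyclic bookkeeping: at each inductive step one must check that both lex comparisons genuinely pivot on the same pair $(\alpha_l, \alpha_{i+l})$, and one must handle the endpoint $l = j$ where the two ``halves'' of the sequences meet. Interestingly, the left-side hypothesis seems unused in this argument and appears to be retained only to set up the later application of the proposition.
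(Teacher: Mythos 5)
Your proof is correct and follows essentially the same route as the paper: both derive $\alpha_{i+l}=\alpha_l$ by playing the lexicographic minimality of $\mathcal{S}(r_0)$ against $\mathcal{S}(r')$ in one direction and against the shifted sequence $\mathcal{S}(r_i)$ in the other, and conclude that two robots would share an angular sequence (equivalently, that the sequence is periodic), contradicting rotational asymmetry. Your single induction on $l=1,\dots,j$ is just a cleaner packaging of the paper's block-by-block iteration (which argues $\alpha_{pi+m}=\alpha_m$ until $k=ti$), and your observation that the left-side hypothesis is never actually used is accurate.
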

\begin{proof}
 %\paragraph*{Proof of Proposition~\ref{lemma1}}
We prove this result by contradiction. If possible let there be such a robot $r'$. Then note that $\mathcal{S}(r_0)$ is $\tilde{\alpha}=(\alpha_1,\dots,\alpha_i,\alpha_{i+1},\dots,\alpha_{k-i},\alpha_1,\dots,\alpha_i)$ and $\mathcal{S}(r')$ is $\tilde{\alpha}_1=(\alpha_1,\dots,\alpha_i,\alpha_1,\dots,\alpha_i,\alpha_{i+1},\dots,\alpha_{k-i})$. Since $r_0$ has the strictly smallest angular sequence, so $\alpha_1$ is the smallest angle in the configuration and also $\alpha_{i+1}\le \alpha_1$, which leads to $\alpha_{i+1}=\alpha_1$. Next, we show that $\alpha_2=\alpha_{i+2}$. Since the $\tilde{\alpha}$ is strictly the smallest angular sequence so $\alpha_{i+2}\le \alpha_2$. If $\alpha_{i+2}<\alpha_2$, then the angular sequence $(\alpha_{i+1},\dots,\alpha_{k-i},\alpha_1,\dots,\alpha_i,\alpha_1,\dots,\alpha_i)$ is smaller than $\tilde{\alpha}$, which is a contradiction. Hence $\alpha_2=\alpha_{i+2}$. Therefore by a similar argument, we can show that $\alpha_{i+j}=\alpha_j$ for $j=3,\dots,i$. Now if $2i= k$ then we see that $\tilde{\alpha}=\tilde{\alpha}_1$, which contradicts the fact that the configuration is rotationally asymmetric. Otherwise, proceeding similarly we can show that $\alpha_{2i+j}=\alpha_j$, for $j=1,2,\dots,i$. Repeating the same argument we can show that $\alpha_{pi+j}=\alpha_j$, for $j=1,2,\dots,i$. Since there are finitely many angles, so after finite number of steps we must end up having that $k=ti$ where $t\ge 2$ and $\tilde{\alpha}=\tilde{\alpha}_1=(\alpha_1,\dots,\alpha_i,\alpha_1,\dots,\alpha_i,\dots,\alpha_1,\dots,\alpha_i)$, which is again a contradiction. \qed
 \end{proof}

Next, we state a simple observation in the following Proposition~\ref{lemma00}.
\begin{proposition}\label{lemma00}
Suppose there is a rotationally asymmetric configuration with no multiplicity point with the true leader, $r_0$ (say), then on including a robot, say $r$, on the circle at an empty point without bringing any rotational symmetry, the true leader of the new configuration must be in $[r_0,r]$.    
\end{proposition}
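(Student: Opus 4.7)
The plan is to prove the contrapositive: every robot $r_j$ lying strictly outside the arc $[r_0,r]$ fails to be the true leader of the new configuration $\mathcal{C}'$, because its new angular sequence is lexicographically strictly larger than that of $r_0$. Since $\mathcal{C}'$ is rotationally asymmetric with no multiplicity point (by the hypothesis that inserting $r$ creates no symmetry), it has a unique true leader, so ruling out every robot outside $[r_0,r]$ forces the true leader to lie in $[r_0,r]$.

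To set up the comparison I would label the robots of $\mathcal{C}$ as $r_0,r_1,\dots,r_{k-1}$ in clockwise order, write $\tilde{\alpha}=(\alpha_1,\dots,\alpha_k)=\mathcal{S}(r_0)$, and suppose $r$ falls in the $(i_0+1)$-th gap, namely between $r_{i_0}$ and $r_{i_0+1}$. Then $[r_0,r]$ contains exactly $r_0,r_1,\dots,r_{i_0},r$, so only the robots $r_j$ with $i_0+1\le j\le k-1$ need to be ruled out. The insertion splits the single angle $\alpha_{i_0+1}$ into two positive pieces $\beta_1=\alpha(r_{i_0},r)$ and $\beta_2=\alpha(r,r_{i_0+1})$. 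The structural observation I would rely on is that this split appears at position $i_0+1$ in $r_0$'s new sequence, but at the strictly later position $(k-j)+i_0+1$ in $r_j$'s new sequence, because from $r_j$ one must first traverse $r_{j+1},\dots,r_{k-1},r_0,r_1,\dots,r_{i_0}$ before reaching the gap that now contains $r$.

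Now let $m$ be the first position at which $\tilde{\alpha}$ and $\mathcal{S}(r_j)$ differ, so $\alpha_m$ is strictly less than the $m$-th entry of $\mathcal{S}(r_j)$. I split into two cases. If $m\le i_0+1$, neither new sequence has been affected by the split in positions below $m$, so positions $1,\dots,m-1$ still agree; at position $m$, either $m<i_0+1$ and the original strict inequality transfers directly, or $m=i_0+1$ and the $r_0$-side is even smaller because $\beta_1<\alpha_{i_0+1}$. If instead $m>i_0+1$, then $\tilde{\alpha}$ and $\mathcal{S}(r_j)$ already agree at position $i_0+1$ with common value $\alpha_{i_0+1}$; the two new sequences also agree at positions $1,\dots,i_0$, and at position $i_0+1$ the $r_0$-side equals $\beta_1$ while the $r_j$-side is still $\alpha_{i_0+1}$, so $\beta_1<\alpha_{i_0+1}$ settles the comparison. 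Either way, $r_0$'s new sequence is lexicographically strictly smaller than $r_j$'s, as required.

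The main obstacle I expect is the index bookkeeping: verifying that for every $r_j$ outside $[r_0,r]$ the split lands strictly later in $r_j$'s new sequence than in $r_0$'s new sequence, and carefully tracking the first differing position relative to $i_0+1$. Once that positional asymmetry is pinned down, the single inequality $\beta_1<\alpha_{i_0+1}$ does the real work, letting $r_0$ (rather than any robot in the opposite arc) win the lexicographic race in the new configuration.
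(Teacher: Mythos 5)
Your proof is correct. The paper states this proposition as a ``simple observation'' and supplies no proof at all, so there is nothing to compare against; your argument is the natural way to make the observation rigorous. The index bookkeeping checks out: for any $r_j$ outside $[r_0,r]$ the split of $\alpha_{i_0+1}$ lands at position $(k-j)+i_0+1\ge i_0+2$ in $r_j$'s new sequence, so the lexicographic comparison with $r_0$'s new sequence is decided at a position no later than $i_0+1$, before the split can help $r_j$, and the single inequality $\beta_1<\alpha_{i_0+1}$ together with the minimality of $r_0$'s old sequence closes both cases (including the vacuous edge case where $r$ lies in the last gap and every robot is already in $[r_0,r]$).
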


For an undecided leader $r$, there may be two possibilities. The first one is when $r$ is a true leader in $\mathcal{C}_0(r)$ configuration but not in $\mathcal{C}_1(r)$. The second one is when $r$ is a true leader in $\mathcal{C}_1(r)$ configuration but not in $\mathcal{C}_0(r)$. We show that the second possibility can not occur. We formally state the result in the following Proposition.

\begin{proposition}\label{lemma30}
If a robot $r$ is a undecided leader in asymmetric configuration with no multiplicity point, $r$ is the true leader in $\mathcal{C}_0(r)$ configuration but $r$ is not the true leader in $\mathcal{C}_1(r)$ configuration.
\end{proposition}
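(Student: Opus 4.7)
The plan is a proof by contradiction that combines Propositions~\ref{lemma00} and~\ref{lemma1} with careful bookkeeping of angular sequences. Suppose $r$ is the true leader of $\mathcal{C}_1(r)$ but not of $\mathcal{C}_0(r)$, and let $r^*$ denote the true leader of $\mathcal{C}_0(r)$. I first apply Proposition~\ref{lemma00} to $\mathcal{C}_0(r)$ with the antipodal point $p^*$ of $r$ added (producing $\mathcal{C}_1(r)$): the new true leader must lie on the clockwise arc $[r^*, p^*]$, so $r \in [r^*, p^*]$. Combined with $\alpha(r, p^*) = \pi$ and $r^* \neq p^*$, this gives $\alpha(r, r^*) > \pi$, placing $r^*$ to the left of $r$ in $\mathcal{C}_1(r)$ --- exactly the hypothesis Proposition~\ref{lemma1} needs.

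Next I set up notation: let $m$, $l$, $q$ denote the numbers of robots strictly in the clockwise arcs from $r$ to $p^*$, from $p^*$ to $r^*$, and from $r^*$ to $r$, respectively, and write $\mathcal{S}^1(r) = (\alpha_1, \dots, \alpha_k)$ with $k = m + l + q + 3$; $\mathcal{S}^0$ is obtained from $\mathcal{S}^1$ by merging the two angles adjacent to $p^*$. For any position $i \in \{1, \dots, m\}$ the entries $\mathcal{S}^0(r)_i, \mathcal{S}^1(r)_i$ both equal $\alpha_i$ and $\mathcal{S}^0(r^*)_i, \mathcal{S}^1(r^*)_i$ both equal $\alpha_{m+l+2+i}$. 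Hence any strict inequality from (a) $\mathcal{S}^0(r^*) \prec \mathcal{S}^0(r)$ at such a position would be contradicted by the strict inequality from (b) $\mathcal{S}^1(r) \prec \mathcal{S}^1(r^*)$ at the same position, so both remain at equality throughout positions $1, \dots, m$, forcing $\alpha_i = \alpha_{m+l+2+i}$ for $i = 1, \dots, m$. When $m \geq q + 1$, these equalities already identify $\mathcal{S}^1(r^*, r)$ with $(\alpha_1, \dots, \alpha_{q+1})$, contradicting Proposition~\ref{lemma1}.

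The hard part will be the case $m \leq q$, where the forced equalities stop short of the range Proposition~\ref{lemma1} requires. My approach is to zoom in on position $m + 1$: there $\mathcal{S}^0(r^*)_{m+1} = \mathcal{S}^1(r^*)_{m+1} = \alpha_{2m+l+3}$, while $\mathcal{S}^1(r)_{m+1} = \alpha_{m+1}$ and $\mathcal{S}^0(r)_{m+1} = \alpha_{m+1} + \alpha_{m+2}$. Either the strict inequality of (b) occurs only beyond position $q + 1$, in which case the equalities extend enough to re-trigger the Proposition~\ref{lemma1} contradiction, or it occurs within $\{m+1, \dots, q+1\}$, yielding three sub-cases depending on whether each of (a) and (b) is strict or equal at position $m + 1$. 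In each sub-case I would substitute the derived equalities into the geometric identity $\sum_{i=m+2}^{k} \alpha_i = \pi$ (which holds since $\alpha(p^*, r) = \pi$); the resulting computation should yield either $\alpha_{2m+l+3} < \alpha_{m+1}$ (contradicting the strict inequality coming from (b)) or a sum of strictly positive angles equaling zero, both impossible.
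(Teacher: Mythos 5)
The paper states Proposition~\ref{lemma30} without any proof, so there is no ``paper's argument'' to compare yours against; judged on its own, your strategy is sound and does go through. The use of Proposition~\ref{lemma00} correctly forces $\alpha(r,r^*)>\pi$ (so $r^*$ sits to the left of $r$, exactly the hypothesis Proposition~\ref{lemma1} needs with $r_0=r$, $r'=r^*$), and the observation that positions $1,\dots,m$ of both $r$'s and $r^*$'s sequences are unchanged by deleting $p^*$ --- so that the comparisons (a) and (b) would have to break in opposite directions at any such position --- correctly pushes the first disagreement of (b) past position $m$. Two remarks. First, your formula $\mathcal{S}^1(r^*)_i=\alpha_{m+l+2+i}$ is only literally valid for $i\le q+1$; when $m>q+1$ the index wraps around, but the property you actually use, namely $\mathcal{S}^0(r^*)_i=\mathcal{S}^1(r^*)_i$ for all $i\le m$ (the merge touches $r^*$'s sequence only at positions $q+m+2$ and $q+m+3$), still holds, so this is cosmetic. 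Second, the ``hard part'' you leave as a sketched computation does close, and more simply than your three sub-cases suggest: if the first disagreement of (b) occurs at a position $j$ with $m+1\le j\le q+1$, then using $\alpha_{m+l+2+i}=\alpha_i$ for $i<j$, the strict inequality $\alpha_{m+l+2+j}>\alpha_j$, the positivity of the discarded tail terms, and $\sum_{i=1}^{m+1}\alpha_i=\pi$, one gets
\[
\pi-\sum_{i=m+2}^{m+l+2}\alpha_i \;=\; \sum_{i=1}^{q+1}\alpha_{m+l+2+i} \;\ge\; \sum_{i=1}^{j-1}\alpha_i+\alpha_{m+l+2+j} \;>\; \sum_{i=1}^{j}\alpha_i \;\ge\; \sum_{i=1}^{m+1}\alpha_i \;=\;\pi,
\]
which forces the strictly positive sum $\sum_{i=m+2}^{m+l+2}\alpha_i$ (it has $l+1\ge 1$ terms) to be negative --- a contradiction obtained without consulting (a) at position $m+1$ and without any case split on strictness. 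Combined with your Proposition~\ref{lemma1} argument for the remaining situations ($m\ge q+1$, or the first disagreement of (b) lying beyond position $q+1$), this completes the proof.
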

From Proposition~\ref{lemma30} one can observe that if the true leader of an asymmetric configuration with no multiplicity point is an undecided leader then its antipodal position must be empty. And also if a robot, which is not the true leader of the configuration, becomes an undecided leader then its antipodal position must be non-empty. We record these observations in the following Corollaries.

\begin{corollary}\label{Cor1}
In a rotationally asymmetric configuration with no multiplicity point if the true leader of the configuration is an undecided leader then its antipodal position must be empty.
\end{corollary}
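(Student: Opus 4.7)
The plan is to derive this corollary as an almost immediate consequence of Proposition~\ref{lemma30}, which does the real work. The corollary only needs to translate the statement about the hypothetical configurations $\mathcal{C}_0(r)$ and $\mathcal{C}_1(r)$ into a statement about the actual configuration on the circle.

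First I would fix the hypothesis: let $r$ be the true leader of the actual configuration $\mathcal{C}$ and assume that $r$ is also an undecided leader. By the definition of an undecided leader, both $\mathcal{C}_0(r)$ and $\mathcal{C}_1(r)$ are possible configurations, so in particular one of them coincides with $\mathcal{C}$, depending on whether the antipodal position of $r$ is occupied or not.

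Next I would invoke Proposition~\ref{lemma30}, which tells us that $r$ is the true leader of $\mathcal{C}_0(r)$ but not the true leader of $\mathcal{C}_1(r)$. Since $r$ is by assumption the true leader of $\mathcal{C}$, and since $\mathcal{C}$ equals either $\mathcal{C}_0(r)$ or $\mathcal{C}_1(r)$, the only option consistent with Proposition~\ref{lemma30} is $\mathcal{C}=\mathcal{C}_0(r)$. This means the antipodal position of $r$ is empty, which is the claim.

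There is essentially no obstacle here; the proof is a one-step deduction from Proposition~\ref{lemma30} together with the unambiguous meaning of ``true leader of $\mathcal{C}$.'' The only care needed is to note that it is precisely the global hypothesis ``$r$ is the true leader of the actual configuration'' that forces the disambiguation between $\mathcal{C}_0(r)$ and $\mathcal{C}_1(r)$; without it the corollary would not follow.
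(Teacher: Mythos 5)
Your proposal is correct and matches the paper's intent exactly: the paper states this corollary as an immediate observation from Proposition~\ref{lemma30}, and your one-step deduction (since $\mathcal{C}$ must be one of $\mathcal{C}_0(r)$ or $\mathcal{C}_1(r)$, and $r$ is not the true leader of $\mathcal{C}_1(r)$, the actual configuration must be $\mathcal{C}_0(r)$) is precisely the argument the paper leaves implicit. Nothing is missing.
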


\begin{corollary}\label{Cor2}
In a rotationally asymmetric configuration with no multiplicity point if a robot other than the true leader becomes an undecided leader then its antipodal position must be non-empty.
\end{corollary}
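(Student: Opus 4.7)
The plan is to derive Corollary~\ref{Cor2} as an immediate consequence of Proposition~\ref{lemma30}, using a short contradiction argument. Let $r$ be a robot that is an undecided leader but is not the true leader of the actual configuration $\mathcal{C}$. By the definition of the undecided leader, both $\mathcal{C}_0(r)$ (antipodal position empty) and $\mathcal{C}_1(r)$ (antipodal position non-empty) are a priori possible, and the actual configuration $\mathcal{C}$ coincides with exactly one of them, depending on whether the point at angular distance $\pi$ from $r$ is occupied.

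Suppose, for contradiction, that the antipodal position of $r$ is empty. Then the actual configuration $\mathcal{C}$ equals $\mathcal{C}_0(r)$. Now apply Proposition~\ref{lemma30}: since $r$ is an undecided leader, $r$ must be the true leader in $\mathcal{C}_0(r)$. But $\mathcal{C}_0(r)=\mathcal{C}$, so $r$ is the true leader of $\mathcal{C}$, contradicting our hypothesis that $r$ is not the true leader of the actual configuration. Hence the antipodal position of $r$ must be non-empty.

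The only real content here is the invocation of Proposition~\ref{lemma30}, which forces the asymmetry between the two hypothetical configurations in exactly the direction we need (leader in $\mathcal{C}_0(r)$, non-leader in $\mathcal{C}_1(r)$), so no additional combinatorial work is required. There is no genuine obstacle: once Proposition~\ref{lemma30} is available, the corollary is a one-line case analysis ruling out the empty antipodal possibility.
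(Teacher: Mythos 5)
Your proof is correct and is exactly the argument the paper intends: the corollary is stated as an immediate consequence of Proposition~\ref{lemma30}, and your contradiction step (if the antipodal position were empty, the actual configuration would be $\mathcal{C}_0(r)$, in which Proposition~\ref{lemma30} forces $r$ to be the true leader) is the same one-line observation the authors record without writing out. No gaps.
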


In Proposition~\ref{lemma44}, we record an important property regarding the locations of the expected leaders, and proposition~\ref{lemma3} determines the total number of possible expected leaders in a rotationally asymmetric configuration without multiplicity points.

\begin{proposition}\label{lemma44}
Let $\mathcal{C}$ be a rotationally asymmetric configuration with no multiplicity point and $L$ be the true leader of the configuration, then another expected leader $r$ of $\mathcal{C}$ must satisfy $\alpha(L,r)\ge\pi$.
\end{proposition}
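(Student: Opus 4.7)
The plan is to show that any expected leader $r$ other than the true leader $L$ must be an undecided leader whose antipodal position is occupied, and then to locate $L$ on the circle by viewing the actual configuration $\mathcal{C}$ as the result of inserting the antipodal robot into the hypothetical configuration $\mathcal{C}_0(r)$.

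First, I would argue that $r$ cannot be cognizant: a cognizant leader is always the true leader, and the true leader is unique, so an expected leader $r \neq L$ must be an \emph{undecided} leader. Corollary~\ref{Cor2} then forces $r$'s antipodal position to be occupied by some robot $r'$, so in reality $\mathcal{C} = \mathcal{C}_1(r)$ and the hypothetical configuration $\mathcal{C}_0(r)$ coincides with $\mathcal{C} \setminus \{r'\}$. Because $r$ is undecided, both $\mathcal{C}_0(r)$ and $\mathcal{C}_1(r)$ must be rotationally asymmetric (otherwise $r$ would have only one possibility, contradicting being undecided), and by Proposition~\ref{lemma30}, $r$ is the true leader of $\mathcal{C}_0(r)$.

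The key move is then to apply Proposition~\ref{lemma00} with $\mathcal{C}_0(r)$ as the starting configuration (true leader $r$) and $r'$ as the added robot at the empty antipodal point of $r$. Since the resulting configuration is $\mathcal{C}$, which is rotationally asymmetric and has no multiplicity point, the hypotheses of Proposition~\ref{lemma00} are met. Its conclusion places the new true leader $L$ inside $[r, r']$, so $\alpha(r, L) \le \alpha(r, r') = \pi$, which rearranges to $\alpha(L, r) = 2\pi - \alpha(r, L) \ge \pi$, as claimed.

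I do not anticipate a serious obstacle, as the proof is essentially a chain of three already-established results. The only step warranting a quick sanity check is the applicability of Proposition~\ref{lemma00} to $\mathcal{C}_0(r)$: no multiplicity point is immediate (removing a single robot from a multiplicity-free $\mathcal{C}$ cannot create one), and rotational asymmetry is exactly what the definition of an undecided leader provides via Possibility-3 in the preliminaries.
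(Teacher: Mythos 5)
Your proof is correct and follows essentially the same route as the paper's: identify $r$ as an undecided leader, invoke Proposition~\ref{lemma30} (equivalently Corollary~\ref{Cor2}) to get that $r$ is the true leader of $\mathcal{C}_0(r)$ with its antipodal position occupied, and apply Proposition~\ref{lemma00} to place $L$ in $[r,r']$. The only cosmetic difference is that you argue directly while the paper phrases the same chain as a contradiction, and you spell out the hypothesis checks for Proposition~\ref{lemma00} that the paper leaves implicit.
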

\begin{proof}
 %\paragraph*{Proof of Proposition~\ref{lemma44}}
If possible let there be another expected leader $r$ such that $\alpha(L,r)<\pi$. Since $r$ is not the leader of the configuration, so $r$ must be an undecided leader. Hence from Proposition~\ref{lemma30}
we have that $r$ is true leader in $\mathcal C_0(r)$ configuration and $L$ is the true leader of $\mathcal C_1(r)$ configuration. This contradicts the Proposition~\ref{lemma00}. \qed

 \end{proof}

\begin{proposition}\label{lemma3}
For any given rotationally asymmetric configuration with no multiplicity point, there can be at most one undecided leader other than the true leader.
\end{proposition}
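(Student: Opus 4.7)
The plan is to argue by contradiction: suppose there are two distinct undecided leaders $r_1 = s_p$ and $r_2 = s_q$ other than the true leader $L = s_0$, with $p < q$. By Proposition~\ref{lemma44} both have clockwise angular distance at least $\pi$ from $L$, and by Corollary~\ref{Cor2} (together with Proposition~\ref{lemma30}) the antipodal positions of both $s_p$ and $s_q$ are occupied; let these antipodals be $s_j$ and $s_m$ respectively. A short arithmetic check, using $\alpha(L, s_p) < \alpha(L, s_q)$, gives $j < m < p < q$.

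Next I would exploit the undecided-leader condition on $s_p$. Since $s_p$ is the true leader of $\mathcal{C}_0(s_p)$ (the actual configuration with $s_j$ removed, so that the two adjacent arcs $\alpha_j$ and $\alpha_{j+1}$ are replaced by their sum), comparing the angular sequences of $s_p$ and $L$ in $\mathcal{C}_0(s_p)$ position by position, and combining with the fact that $L$ is the true leader of $\mathcal{C}$ itself, forces the equalities $\alpha_{p+i} = \alpha_i$ for $i = 1, \dots, j-1$ and the bound $\alpha_j \le \alpha_{p+j} \le \alpha_j + \alpha_{j+1}$ at position $j$. The same argument applied to $s_q$ gives $\alpha_{q+i} = \alpha_i$ for $i = 1, \dots, m-1$. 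Let $t_p$ (resp.\ $t_q$) be the smallest index at which $\alpha_{p+i}$ (resp.\ $\alpha_{q+i}$) first deviates from $\alpha_i$; then $t_p \ge j$, $t_q \ge m$, and the true-leadership of $L$ in $\mathcal{C}$ forces $\alpha_{p+t_p} > \alpha_{t_p}$ and $\alpha_{q+t_q} > \alpha_{t_q}$ strictly.

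To obtain the contradiction I split into two cases. If $t_p \ne t_q$, say WLOG $t_p < t_q$, I compare the angular sequences of $s_p$ and $s_q$ in $\mathcal{C}_0(s_p)$: the two sequences coincide on positions $1, \dots, t_p - 1$ (where both evaluate to $\alpha_i$), while at position $t_p$ the first has value $\alpha_{p+t_p} > \alpha_{t_p}$ and the second has value $\alpha_{q+t_p} = \alpha_{t_p}$ (using $t_p < t_q$). Hence $s_q$'s sequence is lex strictly smaller than $s_p$'s in $\mathcal{C}_0(s_p)$, contradicting the true-leadership of $s_p$ there. If instead $t_p = t_q$, the same comparison combined with the symmetric one in $\mathcal{C}_0(s_q)$ gives $\alpha_{p + t_p} \le \alpha_{q + t_p}$ and $\alpha_{q + t_p} \le \alpha_{p + t_p}$, hence equality, and an inductive extension to subsequent positions yields $\alpha_{p+i} = \alpha_{q+i}$ for every $i$. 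This means the cyclic angle sequence is invariant under a shift by $q - p$ positions, a rotational symmetry contradicting the hypothesis.

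The main obstacle I foresee is the careful bookkeeping in the degenerate cases $j \in \{0, 1\}$ (where the antipodal of $s_p$ is $L$ itself or is adjacent to $L$, so the merged arc sits at the wrap-around of the cyclic sequence and the initial prefix comparison has to be redone), together with the situation $t_p > k - q$ in which the sequence from $s_q$ in $\mathcal{C}_0(s_p)$ wraps around and picks up the merged arc before position $t_p$; in each such subcase the principle is the same (matched prefix followed by strict divergence at $t_p$) but the algebra has to be tracked separately.
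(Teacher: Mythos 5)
Your route is genuinely different from the paper's: the paper argues case by case about $\pi$-visibility (who can see whose leading angle, with Case I/II according to whether $r_1$ is antipodal to $L$), while you set up a purely combinatorial squeeze, playing $L$'s lex-minimality in $\mathcal{C}$ against $s_p$'s (resp.\ $s_q$'s) lex-minimality in the deleted configurations $\mathcal{C}_0(s_p)$, $\mathcal{C}_0(s_q)$. The main case ($j\ge 1$, $t_p\ne t_q$) is correct and, to my eye, cleaner than the paper's argument. Moreover, your worry about the merged arc entering the comparison before position $t_p$ is actually vacuous there: since $j<m$ forces $m\ge 1$, the robot $s_q$ is strictly to the left of $L$, so Proposition~\ref{lemma1} gives $t_q\le k-q$, and with $t_p<t_q$ this yields $t_p<k-q\le k-q+j$, so the merge in $s_q$'s sequence inside $\mathcal{C}_0(s_p)$ sits strictly beyond position $t_p$. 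You should add this one line; it closes that flagged hole completely.

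However, two of the subcases you dismiss as bookkeeping conceal genuine gaps. First, $j=0$ (i.e.\ $\alpha(L,s_p)=\pi$, the paper's entire Case~I): your prefix equalities $\alpha_{p+i}=\alpha_i$ come from squeezing $s_p$ against $L$ \emph{inside} $\mathcal{C}_0(s_p)$, but when $s_j=L$ the deleted robot is $L$ itself, so the inequality $\alpha_{p+i}\le\alpha_i$ has no source; comparing $s_p$ against $s_1$ in $\mathcal{C}\setminus\{L\}$ only gives $\alpha_1\le\alpha_{p+1}\le\alpha_2$, not equality. A different comparison partner (essentially the content of the paper's Case~I) is needed, not just re-indexed algebra. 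Second, the $t_p=t_q=t$ endgame: the induction $\alpha_{p+i}=\alpha_{q+i}$ propagates only until position $k-q+j$, where $s_q$'s sequence in $\mathcal{C}_0(s_p)$ carries the merged arc $\alpha_j+\alpha_{j+1}$; from there the two minimality conditions deliver only $\alpha_j\le\alpha_{p+k-q+j}\le\alpha_j+\alpha_{j+1}$ rather than equality, so the claimed invariance under the shift by $q-p$ (hence rotational symmetry) does not follow from the stated induction. Both subcases need new arguments before the proof is complete.
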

\begin{proof}
 % \paragraph*{Proof of Proposition~\ref{lemma3}}
Let $\mathcal{C}$ be the given configuration. If possible let $r_1$ and $r_2$ be two undecided leaders other than the true leader, say $L$. From Proposition~\ref{lemma44}, First we have that $\alpha(L,r_i)\ge\pi$, for $i=1,2$. Without loss of generality we assume $\alpha(L,r_1)<\alpha(L,r_2)$. Since $r_i$s are undecided leaders so from Proposition~\ref{lemma30}, their antipodal position is non-empty. Let for each $i$, $r_i'$ be the antipodal robot of $r_i$. There are two exhaustive cases. First one is $\alpha(L,r_1)=\pi$ and second one is $\alpha(L,r_1)>\pi$.

\textbf{Case-I}: $\alpha(L,r_1)=\pi$

\textit{Case-IA:} In this case let the first angle of the angle sequences of $r_1$ and $L$ be different. Let the first angle in $\mathcal{S}(L)$ and $\mathcal{S}(r_1)$ are $\theta$ and $\theta_1$ respectively, then $\theta<\theta_1$. Since $r_2'$ cannot be the clockwise neighbor of $L$, so $r_2$ can see the leading angle of $L$. Therefore in order to become an undecided leader, the $\lambda(r_2)$ has to be $\theta$. Now there may be two cases. Firstly either the clockwise neighbor of $r_2$ is $L$ or not. If the clockwise neighbor of $r_2$ is $L$ then from Proposition~\ref{lemma00}, $L$ does not remain the true leader of the configuration, which is a contradiction. Secondly, if the clockwise neighbor of $r_2$ is not $L$ then $r_1$ can see the leading angle of $r_2$, that is $\theta$ which is smaller than $\lambda(r_1)$. This gives $\mathcal{C}\smallsetminus L$ cannot have $r_1$ as a true leader. This contradicts the fact that $r_1$ is an undecided leader.

\textit{Case-IB:} Let the first $t$ angles of $\mathcal{S}(L)$ and $\mathcal{S}(r_1)$ are same and $(t+1)^{th}$ angles are different. Let first $t$ clockwise neighbors of $L$ in clockwise order are $x_1,x_2,\dots,x_t$ and first $t$ clockwise neighbors of $r_1$ in clockwise order are $x_1',x_2',\dots,x_t'$. First, we show that $r_2$ is none of $x_i'$s. If possible let $r_2=x_i'$ where $i<t$. Then $r_2$ can see the first $i-1$ angles of $\mathcal{S}(L)$ and $\mathcal{S}(r_1)$. Since $i^{th}$ clockwise neighbor of $r_2$ is at the left of $r_2'$, $r_2$ can see its first $i$ angles of $\mathcal{S}(r_2)$ in original configuration. Now first we observe that the first $i-1$ angles of $r_2$ and $L$ are the same. If not then either $r_2$ would not be an undecided leader or $L$ would not be leader of the configuration. Now we see that $i^{th}$ angle of $\mathcal{S}(L)$ and $\mathcal{S}(r_2)$ is also same. If possible let $i^{th}$ angle of $\mathcal{S}(L)$ and $\mathcal{S}(r_2)$ are $\theta_i$ and $\alpha_i$ respectively, and $\alpha_i>\theta_i$ (Note that the case $\alpha_i<\theta_i$ gets excluded from the fact that $L$ is the true leader of the configuration). Since $i^{th}$ angle of $r_1$ is also $\theta_i$ which is visible by $r_2$, so then $r_2$ would not be undecided leader. Hence first $i$ angles of $\mathcal{S}(r_1)$ and $\mathcal{S}(r_2)$ are same. Therefore from Proposition~\ref{lemma00}, the $r_2$ cannot be an undecided leader. Hence for each $i<t$, $r_2\ne x_i'$. Now it is easy to observe that $r_2\ne x_t$, because otherwise, $\lambda(r_2)>\lambda(L)$. This implies that $r_2$ is not an undecided leader.

Hence for each $1\le i\le t$, $r_2\ne x_i'$. Thus note that $r_2$ can see the first $t+1$ angles of $\mathcal{S}(L)$ and in order to remain an undecided leader first $t+1$ angles of $\mathcal{S}(r_2)$ should match with it. Now there are two cases, either $(t+1)^{th}$ clockwise neighbor of $r_2$ is $L$ or not. If not then $r_1$ can see the first $t+1$ angles of $\mathcal{S}(r_2)$. And $(t+1)^{th}$ angle of $\mathcal{S}(r_2)$ is smaller than same of $\mathcal{S}(r_1)$. Therefore $r_1$ would not be an undecided leader. In other case if $(t+1)^{th}$ clockwise neighbor of $r_2$ is $L$ then from Proposition~\ref{lemma00}, $L$ does not remain leader of the configuration. Hence for case-I, we end up having a contradiction if there is more than one undecided leader other than a true leader.

\textbf{Case-II}: $\alpha(L,r_1)>\pi$

\textit{Case-IIA:} In this case let the first angle of the angle sequences of $r_1$ and $L$ be different. Let the first angle in $\mathcal{S}(L)$ and $\mathcal{S}(r_1)$ are $\theta$ and $\theta_1$ respectively, then $\theta<\theta_1$. Since in this case $r_2'$ cannot be the clockwise neighbor of $L$, so $r_2$ can see the leading angle of $L$. Therefore in order to become an undecided leader, the $\lambda(r_2)$ has to be $\theta$. Now $r_1$ can see the leading angle of $r_2$, that is $\theta$ which is $<\lambda(r_1)$. This gives $\mathcal{C}\smallsetminus r_1'$ cannot have $r_1$ as true leader. This contradicts the fact that $r_1$ is undecided leader.

\textit{Case-IIB:} Let the first $t$ angles of $\mathcal{S}(L)$ and $\mathcal{S}(r_1)$ are same and $(t+1)^{th}$ angles are different. Let first $t$ clockwise neighbors of $L$ in clockwise order are $x_1,x_2,\dots,x_t$ and $t$ clockwise neighbors of $r_1$ in clockwise order are $x_1',x_2',\dots,x_t'$. Since $r_1$ is not antipodal of $L$, so $r_2'$ cannot be the clockwise first neighbor of $L$. Borrowing the argument from Case-I we can conclude that $r_2$ is none of $x_i'$s. In other cases, $r_2$ can see the first $t+1$ angles of $\mathcal{S}(L)$ and in order to be an undecided leader first $t+1$ angles of $\mathcal{S}(r_2)$ should coincide with the same with $\mathcal{S}(L)$. Now we see that $r_1$ can see the first $t+1$ angles of $\mathcal{S}(r_2)$. If not then, since $r_1'$ is at right to $L$, so $L$ will be at most $t^{th}$ neighbor of $r_1$. This implies $L$ is not the true leader of the configuration from Proposition~\ref{lemma00}. So $r_1$ can see the first $t+1$ angles of $\mathcal{S}(r_2)$. Since $(t+1)^{th}$ angle of $\mathcal{S}(r_2)$ is smaller than the same of $\mathcal{S}(r_1)$ then $r_1$ does not remain an undecided leader of the configuration. Which is a contradiction. 

Hence it is proved that there cannot be more than one undecided leader other than the true leader of the configuration. \qed

 \end{proof}

Let $\mathcal{C}$ be a rotationally asymmetric configuration with no multiplicity point. Then $\mathcal{C}$ will have a true leader and from above Proposition~\ref{lemma3}, there can be at most one more undecided leader. Hence we can have the following four exhaustive cases for $\mathcal{C}$.
\begin{enumerate}
    \item $\mathcal{C}$ has exactly one expected leader and that is a cognizant leader.
    \item $\mathcal{C}$ has exactly one expected leader and that is a undecided leader.
    \item $\mathcal{C}$ has exactly two expected leaders and both are undecided leaders.
    \item $\mathcal{C}$ has exactly two expected leaders. One of them is a cognizant leader and another one is a undecided leader.
\end{enumerate}
The Fig.~\ref{gather0} gives the existence of all four above cases. For the third case, we observe two properties in the following Propositions.

\begin{figure}[ht!]
    \centering
    \begin{subfigure}[t]{0.45\textwidth}
        \centering
        \includegraphics[height=1.2in]{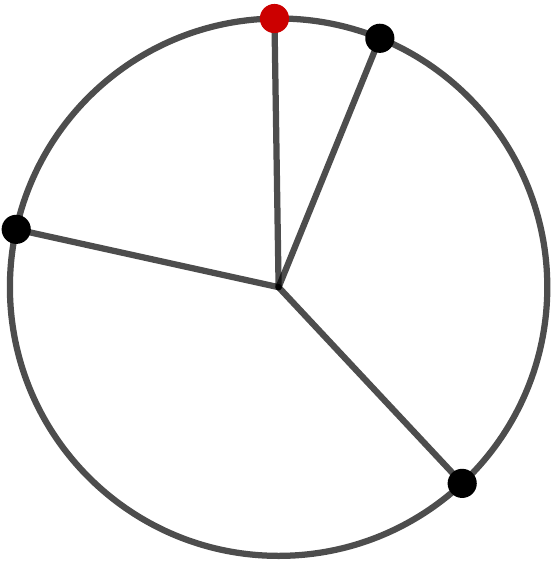}
        \caption{One cognizant leader}
    \end{subfigure}
    \begin{subfigure}[t]{0.45\textwidth}
        \centering
        \includegraphics[height=1.2in]{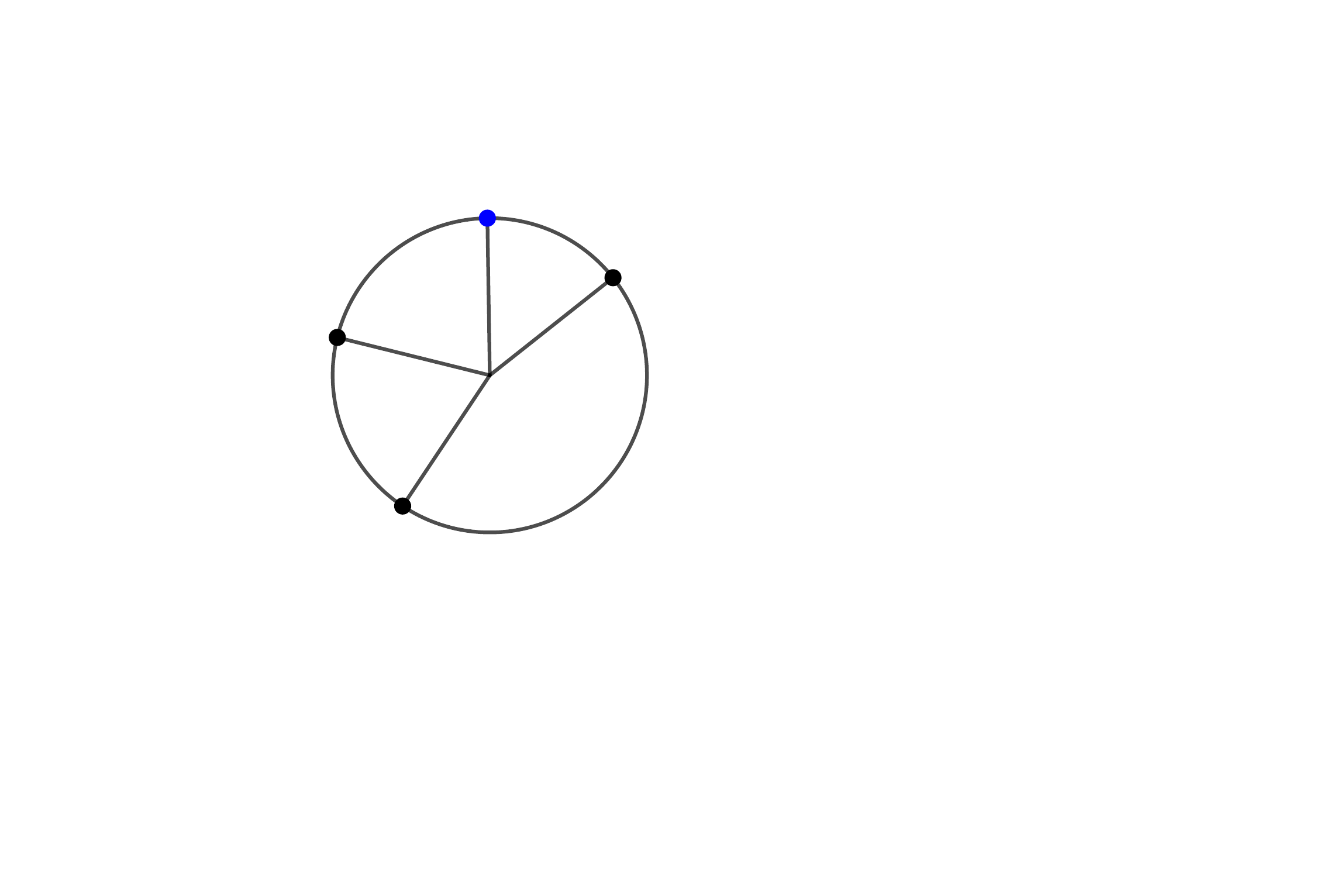}
        \caption{One undecided leader}
    \end{subfigure}
    \begin{subfigure}[t]{0.45\textwidth}
        \centering
        \includegraphics[height=1.2in]{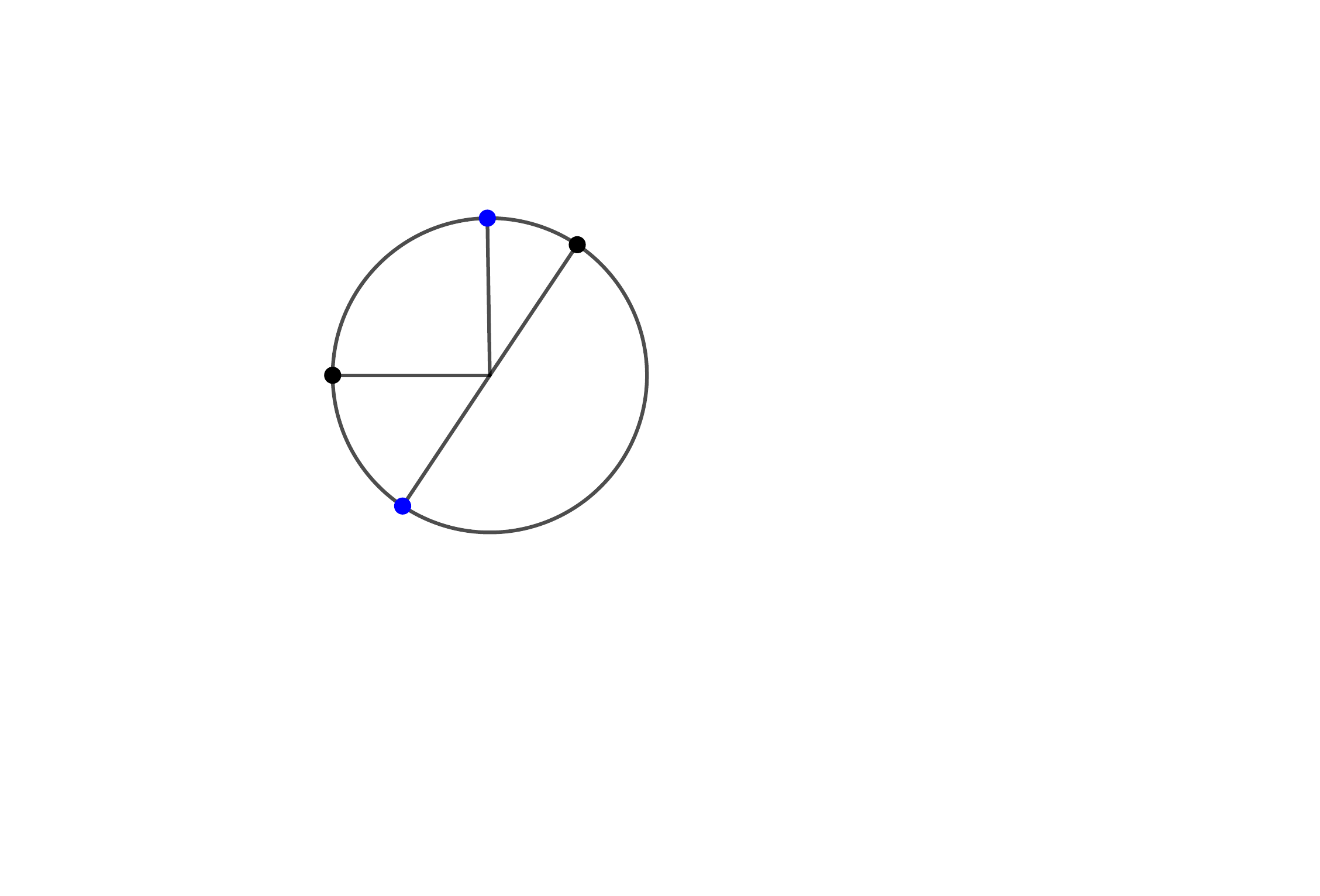}
        \caption{Two undecided leaders}
    \end{subfigure}
    \begin{subfigure}[t]{0.45\textwidth}
        \centering
        \includegraphics[height=1.2in]{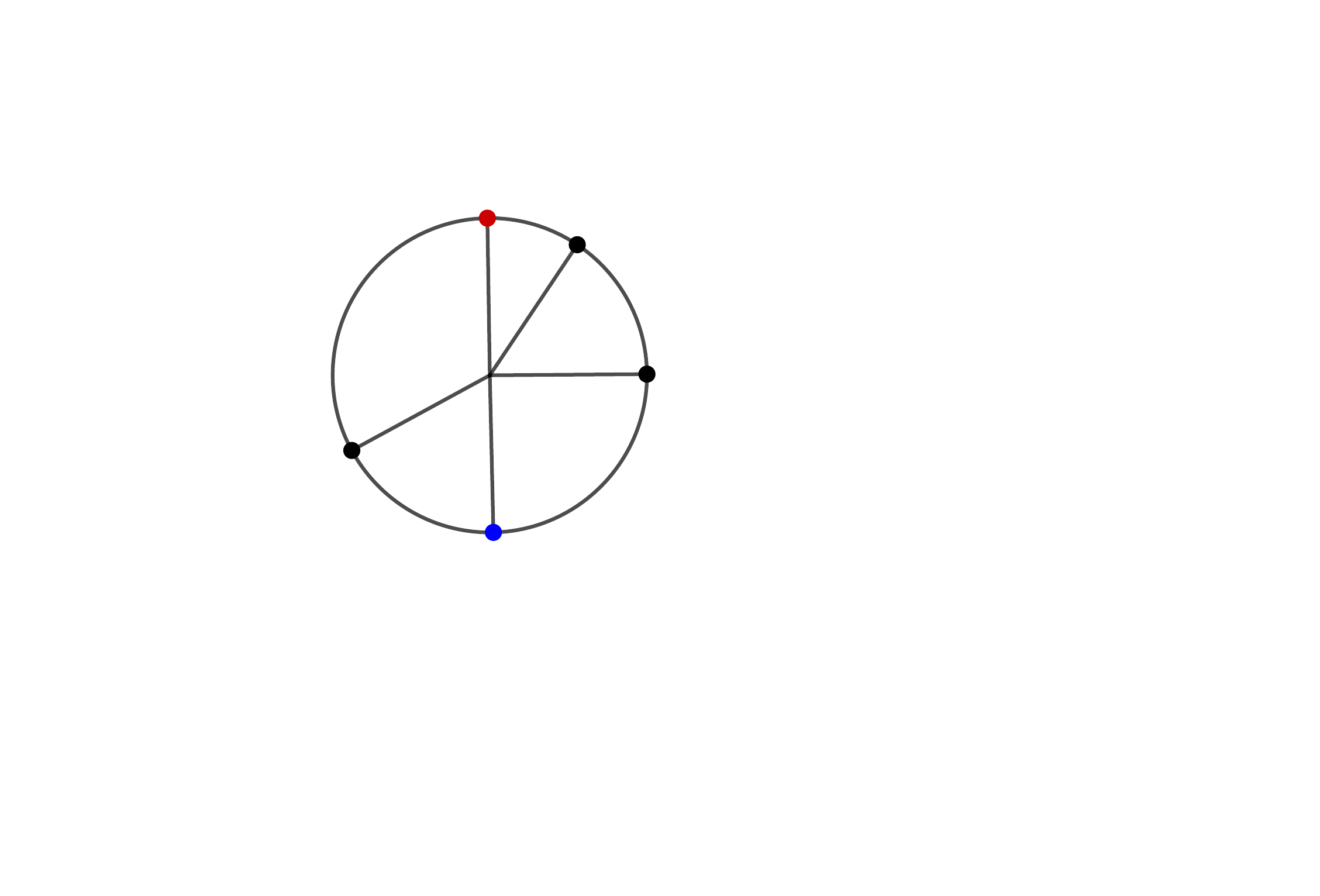}
        \caption{One cognizant and one undecided leader}
    \end{subfigure}
    \caption{Different possibilities of expected leaders; Red colored and blue colored solid circles respectively denote a cognizant leader and an undecided leader.}
    \label{gather0}
\end{figure}

Next, we record more results in Proposition~\ref{lemma2} and Proposition~\ref{lemma3oo} regarding the positions of the expected leaders, which guides the design of the proposed algorithm.

\begin{proposition}\label{lemma2}
If for a rotationally asymmetric configuration with no multiplicity point if there are two undecided leaders then they can not be antipodal of each other.
\end{proposition}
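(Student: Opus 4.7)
The plan is to derive the statement as a direct contradiction between the two corollaries already established about the antipodal positions of undecided leaders, together with the count from Proposition~\ref{lemma3}.

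First I would unpack the hypothesis. By Proposition~\ref{lemma3}, among the expected leaders of a rotationally asymmetric configuration with no multiplicity point, at most one can be an undecided leader other than the true leader. Hence ``two undecided leaders'' forces the true leader $L$ itself to be one of them, and there is exactly one further undecided leader, call it $r \neq L$. So the setup reduces to: $L$ is an undecided true leader, and $r$ is an undecided non-leader; we want to rule out $\alpha(L,r) = \pi$.

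Next I would invoke the two corollaries obtained from Proposition~\ref{lemma30}. Corollary~\ref{Cor1} applied to $L$ says that, since $L$ is an undecided true leader, its antipodal position must be empty. Corollary~\ref{Cor2} applied to $r$ says that, since $r$ is an undecided leader which is not the true leader, its antipodal position must be non-empty. These two facts already encode the tension: one of the robots wants its antipode empty, the other wants its antipode occupied.

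Now I would close the argument by assuming toward contradiction that $L$ and $r$ are antipodal to each other, i.e.\ $\alpha(L,r) = \pi$. Then the antipodal position of $L$ is precisely the point occupied by $r$, hence non-empty, contradicting Corollary~\ref{Cor1}. (Equivalently, the antipodal position of $r$ is the point occupied by $L$; that is consistent with Corollary~\ref{Cor2}, so the contradiction really comes from the $L$-side.) This forces $\alpha(L,r) \neq \pi$, proving the proposition. I do not foresee any technical obstacle here: the entire content of Proposition~\ref{lemma2} is packaged in the asymmetric behaviour of the antipode between the true leader case (Corollary~\ref{Cor1}) and the non-true-leader case (Corollary~\ref{Cor2}), and the only step requiring care is to first reduce, via Proposition~\ref{lemma3}, to the case where one of the two undecided leaders is necessarily $L$.
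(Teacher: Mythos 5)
Your proof is correct and follows essentially the same route as the paper: reduce via Proposition~\ref{lemma3} to the case where one of the two undecided leaders is the true leader, then apply Corollary~\ref{Cor1} to conclude its antipodal position is empty, which rules out the other undecided leader sitting there. The only cosmetic difference is that you also mention Corollary~\ref{Cor2}, which, as you note yourself, is not needed for the contradiction.
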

\begin{proof}
 %\paragraph*{Proof of Proposition~\ref{lemma2}}
Let $p$ and $q$ be two undecided leaders of a rotational asymmetric configuration with no multiplicity point. Now one of $p$ and $q$ must be the true leader of the configuration. Without loss of generality let $p$ be the leader of the configuration. Then from the Corollary~\ref{Cor1}, antipodal position of $p$ must be empty. Hence another undecided leader $q$ cannot be at the antipodal position of $p$. \qed
 \end{proof}

 \begin{proposition}\label{lemma3oo}
If for a rotationally asymmetric configuration with no multiplicity point if there are two undecided leaders then their clockwise neighbors cannot be antipodal to each other.
 \end{proposition}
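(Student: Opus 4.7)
The plan is to argue by contradiction. Assume $p$ and $q$ are two undecided leaders whose clockwise neighbors $p'$ and $q'$ are antipodal. By Proposition~\ref{lemma3}, one of $p, q$ must be the true leader (otherwise there would be two undecided leaders other than the true leader); without loss of generality let $p = L$ be the true leader and $q$ the other undecided leader. Corollary~\ref{Cor1} then forces $L$'s antipodal position to be empty, Corollary~\ref{Cor2} forces $q$'s antipodal position to be occupied, and Proposition~\ref{lemma44} combined with the emptiness of $L$'s antipodal yields the strict inequality $\alpha(L,q) > \pi$.

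Next I would set up clockwise angular coordinates with $L$ at position $0$, so $p'$ sits at position $\lambda(L)$ and the antipodal of $p'$ sits at position $\lambda(L) + \pi$. Because $L$ has the lex-smallest angular sequence, $\lambda(L)$ equals the smallest angular gap, so in particular $\lambda(q) \ge \lambda(L)$; and since the minimum gap in any asymmetric configuration with at least two robots is strictly less than $\pi$, also $\lambda(L) < \pi$, making $\lambda(L)+\pi$ a well-defined clockwise angle in $[0,2\pi)$. The position of $q'$ computed via $q$ is $\alpha(L,q) + \lambda(q) \bmod 2\pi$. Identifying this with $\lambda(L)+\pi$ splits into two cases according to whether wrap-around occurs.

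In the non-wrap case $\alpha(L,q) + \lambda(q) = \lambda(L) + \pi$, I rearrange to $\alpha(L,q) = \lambda(L) + \pi - \lambda(q) \le \pi$, contradicting $\alpha(L,q) > \pi$. In the wrap case $\alpha(L,q) + \lambda(q) = \lambda(L) + 3\pi$, combined with $\alpha(L,q) < 2\pi$, this forces $\lambda(q) > \pi$; but then the supposedly empty clockwise arc $(q,q')$ has length $\lambda(q) > \pi$, whereas $L$ sits at clockwise distance $2\pi - \alpha(L,q) < \pi$ from $q$, placing $L$ strictly inside this arc and contradicting that $q'$ is $q$'s clockwise neighbor. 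I expect the wrap case to be the main subtlety, since its contradiction is geometric (a robot forced to intrude into a supposedly empty arc) rather than a direct angle inequality; the rest is bookkeeping with Proposition~\ref{lemma44}, Corollary~\ref{Cor1}, and the minimum-gap property of the true leader.
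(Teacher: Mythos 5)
Your proposal is correct and follows essentially the same route as the paper: reduce to the case where one of the two undecided leaders is the true leader $L$, obtain $\alpha(L,q)>\pi$ from Proposition~\ref{lemma44} together with the emptiness of $L$'s antipodal position, and then derive a contradiction with the minimality of $\lambda(L)$ from the antipodality of the two clockwise neighbors. The paper compresses this last step into a single inequality justified by a figure (and appears to state it with the inequality reversed), whereas you carry out the angle bookkeeping explicitly and also dispose of the wrap-around case, so your write-up is a more careful version of the same argument.
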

 \begin{proof}
 %\paragraph*{Proof of proposition~\ref{lemma3oo}}
If possible let the clockwise neighbors of two undecided leaders be antipodal to each other. Let $L$ be the true leader of the configuration and $r$ be another undecided leader. Then from Proposition~\ref{lemma44} and Proposition~\ref{lemma2} we have $\alpha(L,r)>\pi$. Then $\lambda(L)<\lambda(r)$ (Figure~\ref{Fig:3}). This contradicts the fact that $L$ is the true leader of the configuration. \qed
\begin{figure}[ht!] 
     \centering      
     \includegraphics[width=0.26\linewidth]{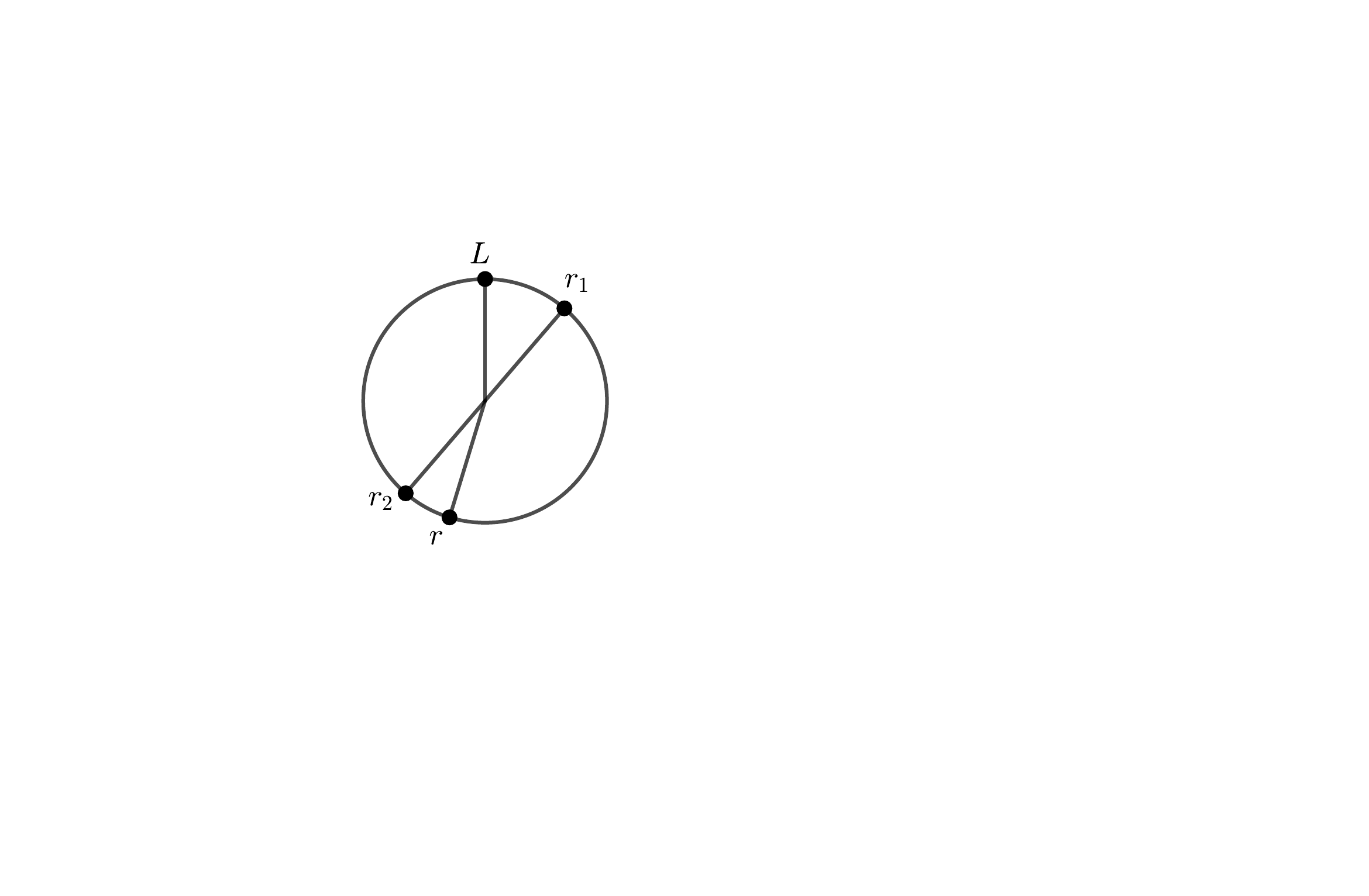}
     \caption{An image related to Proposition~\ref{lemma3oo}}
     \label{Fig:3}
    \end{figure}
 \end{proof}

\section{The proposed algorithm}

Here we present an algorithm in Algorithm~\ref{algo:gathering} to gather the robots. Each robot has the palate consisting the colors:  \texttt{leader\_present}, \texttt{leader\_absent}, \texttt{verify}, \texttt{cognizant}, \texttt{undecided}, \texttt{off}. Initially, all colors of the robots' are set at \texttt{off}. If a robot cannot see any robot then it moves $\pi/2$ distance clockwise. If there is no multiplicity point in its visibility, then the robot verifies whether it is a cognizant leader or not. If it is a cognizant leader then it changes its color to \texttt{cognizant} and move to its clockwise neighbor. If the robot is an undecided leader then it verifies that its clockwise neighbor is safe to move or not. Before for further we give the definition of a \textit{safe} clockwise neighbor.

\begin{definition}[Safe neighbor]
    Suppose $r$ is an undecided leader and $s$ is the first clockwise neighbor of $r$. The robot $s$ is said to be a safe neighbor of $r$ if the first clockwise neighbor of the true leader of $\mathcal C_1(r)$ configuration is not antipodal to $s$.
\end{definition}

\begin{algorithm}[ht!]
 %\DontPrintSemicolon % Some LaTeX compilers require you to use \dontprintsemicolon instead
  \footnotesize

% \KwIn{The set of points occupied by robots visible to $r$} 
%  \KwOut{Destination point for robot $r$}
 \eIf{there is a robot visible}
 {
    \uIf{there is no multiplicity point}
    {
        \uIf{the robot $r$ is a cognizant leader}
        {
            Change the color to \texttt{cognizant}\;
            Move to the clockwise neighbor\;
        }
         \uElseIf{the robot $r$ is an undecided leader}  
         {
            % Change the color to \texttt{undecided}\;
            \uIf{the clockwise neighbor of $r$ is safe}
            {
                Change the color to \texttt{undecided}\;
                Move to the clockwise neighbor\;
            }
            \ElseIf{$\mathcal C_0(r)$ configuration does not have another undecided leader other than $r$}
            {
                \uIf{the clockwise neighbor has color \texttt{off}}
                {
                    Change the color to \texttt{verify}\;
                }
                \uElseIf{the clockwise neighbor of $r$ has color \texttt{leader\_absent}}
                {
                    Change the color to \texttt{cognizant}\;
                    Move to its clockwise neighbor\;
                }
                \ElseIf{the clockwise neighbor of $r$ has color \texttt{leader\_present}}
                {
                    Change the color to \texttt{off}\;
                }
            }
         }
         \uElseIf{the robot $r$ is a follower robot} 
         {
            \If{the counter-clockwise neighbor has color \texttt{verify}}
            {
                Let $r_0$ be the robot with color \texttt{verify}\;
                Let $s$ and $s_0$ be the antipodal positions of $r$ and $r_0$ respectively\;
                \eIf{there is a robot in $[s_0,s)$}
                {
                    Turn the light to \texttt{leader\_present}\;
                }
                {
                    Turn the light to 
                    \texttt{leader\_absent}\;
                }
            }
         }
    }
    \uElseIf{there is a multiplicity point but the robot $r$ is not at any multiplicity point}
    {
        \If{its clockwise or counter-clockwise neighbor is a multiplicity point}
        {
            Move to the closer multiplicity point\;
        }
    }
    \ElseIf{there is another visible position that is a multiplicity point}
    {
        \If{clockwise angular distance from the multiplicity point is $<\pi$}
        {
            Move to the multiplicity point\;
        }
    }
    
 }
 {
    Move $\pi/2$ distance in clockwise direction\;
 }
\caption{\footnotesize Gathering algorithm for $\pi$-visibility; executed by a generic robot $r$ with initial light color \texttt{off}\;}
\label{algo:gathering}
\end{algorithm}

If the clockwise neighbor of the undecided leader is a safe neighbor then the robot changes its light's color to \texttt{undecided} and moves to the clockwise neighbor. If the clockwise neighbor is not a safe neighbor then it verifies whether its $\mathcal C_0$ configuration consists another undecided leader or not. If its $\mathcal C_0$ does not consist another undecided leader, it signals the neighbor to verify whether its antipodal position is occupied or not. It changes its color to \texttt{verify}. 

Suppose a follower robot $r$ sees its counter-clockwise neighbor, say $r_0$ with color \texttt{verify}. Let $s$ and $s_0$ be the antipodal positions of the $r$ and $r_0$ respectively. The the follower robot verifies whether there is a robot in the interval $[s_0,s)$. If there is a robot then it changes its color to \texttt{leader\_present} otherwise, changes its color to \texttt{leader\_absent}. Now if a robot sees its clockwise neighbor has color \texttt{leader\_absent}, then it changes its color to \texttt{cognizant} and moves to its clockwise neighbor. If it sees its clockwise neighbor has color \texttt{leader\_present}, then it changes its color to \texttt{off}. 

Now, suppose there is a multiplicity point and a robot $r$ can see the multiplicity point. If $r$ is not at the multiplicity point and its clockwise or counterclockwise neighbor is a multiplicity point then the robot moves to the closer multiplicity point. Suppose $r$ is at a multiplicity point and there is another multiplicity point visible. In this case if the clockwise distance of the multiplicity point from $r$ is less than $\pi$, then $r$ moves to the another multiplicity point.

\section{Correctness of the proposed algorithm}
First, we categories a rotationally asymmetric configuration with no multiplicity points in the following:
 \begin{itemize}
 \item[\ding{118}] \textit{Configuration-A:} Only the expected leader is the true leader of the configuration. If the expected leader is an undecided leader then its clockwise first neighbor is safe.
 \item[\ding{118}] \textit{Configuration-B:} There are two expected leaders in the configuration.
 \begin{itemize}
     \item \textit{Configuration-BI:} One cognizant leader and one undecided leader.
    % when the undecided leader which is not the true leader finds its clockwise first neighbor safe.
     \item \textit{Configuration-BII:} Two undecided leaders.
     
     % when the undecided leader which is not the true leader finds its clockwise first neighbor unsafe.
 \end{itemize}
 \item[\ding{118}] \textit{Configuration-C:} One expected leader which is an undecided leader sees that its clockwise first neighbor is not safe.
 \end{itemize}

 \begin{lemma}\label{Nonesure}
  If the initial configuration is type configuration-A, then after finite-time execution of Algorithm~\ref{algo:gathering} at least one and at most two multiplicity points will form or, it turned into a configuration-C. If two multiplicity points are formed then they will be non antipodal to each other.
 \end{lemma}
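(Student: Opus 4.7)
The plan is to trace Algorithm~\ref{algo:gathering} executed from a Configuration-A instance, in which a unique expected leader $L$ is either a cognizant leader or an undecided leader whose clockwise neighbor is safe. In both cases the algorithm instructs $L$ to first set its light (to \texttt{cognizant} or \texttt{undecided}, respectively) and then move toward its clockwise neighbor $N$. Every other robot is a follower with light \texttt{off}; since the initial configuration contains no multiplicity point and no robot carrying the color \texttt{verify}, an activated follower's only applicable branch is vacuous, so followers make no motion. Hence the only robot initially in motion is $L$.

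The first step of the proof is to argue that the motion of $L$ is consistent and terminating. Each activation of $L$ produces at least $\delta>0$ progress toward $N$ along the clockwise arc, and $L$ cannot overshoot $N$, so after finitely many cycles $L$ is either at $N$ (creating a multiplicity) or has been re-evaluated in an updated configuration. While $L$ slides toward $N$, $\lambda(L)$ strictly decreases and the incoming angle $\alpha(P,L)$ from its counterclockwise neighbor $P$ strictly increases, while every other angular distance is preserved. A lexicographic argument patterned after Propositions~\ref{lemma1} and~\ref{lemma00} shows that $L$ remains the true leader throughout, so no follower can replace it as the true leader.

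The second step is a case analysis of the intermediate snapshots. If the configuration still matches Configuration-A in every intermediate snapshot until $L$ reaches $N$, then exactly one multiplicity forms, at $N$. Otherwise two things can happen because $L$'s antipodal point shifts as $L$ moves: either $L$'s own classification degrades (cognizant becomes undecided with an unsafe clockwise neighbor, or the clockwise neighbor of the undecided $L$ ceases to be safe), turning the configuration into Configuration-C, which is the second disjunct of the lemma; or a new undecided leader $R$ emerges, turning the configuration into a Configuration-B instance. In the latter case $R$, upon activation, will also colour itself and move to its own clockwise neighbor, producing a second multiplicity point. This is the source of the ``at most two'' bound.

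The third step is the non-antipodal claim, under the assumption that two multiplicities form. If at the moment the second multiplicity forms both expected leaders are undecided leaders, Proposition~\ref{lemma2} and Proposition~\ref{lemma3oo} immediately give that the two leaders, and also their clockwise neighbors, are non-antipodal, so the two multiplicity points are non-antipodal. Otherwise one leader is cognizant and the other undecided, and I would combine the safe-clockwise-neighbor condition built into Configuration-A with Proposition~\ref{lemma44} to rule out the antipodal configuration of the two clockwise neighbors. The main obstacle I foresee is a careful enumeration of the intermediate configurations under the fully asynchronous, non-rigid setting --- in particular, controlling the exact instant at which a follower can first legitimately compute itself to be an undecided leader, and ruling out pathological dynamics such as oscillation back into Configuration-A or the simultaneous birth of two fresh undecided leaders during $L$'s interrupted movement.
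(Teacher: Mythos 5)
Your skeleton matches the paper's (followers stay frozen, the unique expected leader $L$ advances by at least $\delta$ per activation toward its clockwise neighbour $r$, then a case analysis of what can change mid-flight), but the single idea the lemma actually rests on is the one you defer to as ``the main obstacle I foresee'': identifying \emph{which} follower can be promoted while $L$ is in transit. The paper pins this down. While $L$ moves toward $r$, the angle $\alpha(L,r)$ is strictly the smallest in the configuration, and every robot that can see both $L$ and $r$ sees this leading angle and therefore still classifies itself as a follower. The only robot that cannot see it is the robot $r_1$ located at the antipodal position of the \emph{destination} $r$ (it sees $L$ but not $r$). Hence at most one new undecided leader can ever arise, which is what gives the ``at most two'' bound and rules out your feared ``simultaneous birth of two fresh undecided leaders''; and since $r_1$ is the unique point antipodal to $r$, its clockwise neighbour $r_2\neq r_1$ cannot be antipodal to $r$, which is essentially the whole proof of the non-antipodality claim. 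Your plan to derive non-antipodality from Propositions~\ref{lemma2} and~\ref{lemma3oo} does not go through as stated: those propositions concern two undecided leaders in a static, multiplicity-free configuration, whereas here one of the two candidate destinations may already be a multiplicity point when the second leader is classified, and the cognizant-plus-undecided subcase is not covered by Proposition~\ref{lemma3oo} at all; your fallback of ``combining the safe-neighbour condition with Proposition~\ref{lemma44}'' is not an argument.

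A second inaccuracy: you assert that a lexicographic argument shows $L$ ``remains the true leader throughout.'' That holds only while no other robot moves. Once $r_1$ promotes itself and starts moving toward $r_2$, its leading angle shrinks and $L$ (which \emph{can} see $\lambda(r_1)$) can genuinely lose the leadership. The paper uses exactly this event to produce the Configuration-C disjunct: $r_1$, stopped midway and now the sole (true but undecided) leader, may find its clockwise neighbour unsafe. Your alternative mechanism for reaching Configuration-C --- $L$'s own classification degrading because its antipodal point sweeps around --- is not the route the paper takes, and adopting it would leave you with the additional burden of showing that such a degraded $L$ cannot trigger a second multiplicity at an antipodal location. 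In short, the proposal has the right outline but is missing the visibility argument that localizes the second leader at the antipode of $r$, and without it neither the ``at most two'' bound nor the non-antipodality claim is established.
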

 \begin{proof}
     In the initial configuration, let $L$ be the only expected leader which is either a cognizant leader or an undecided leader which finds its first clockwise neighbor safe. Then from the Algorithm~\ref{algo:gathering}, $L$ moves to its clockwise neighbor, say, $r$. All robots other than $L$ in the initial configuration are follower robots. If, on activation, a robot recognizes itself as a follower robot, then it does not move according to the algorithm. Therefore, until $L$ starts its move towards $r$, no other robot will move. While $L$ is moving towards $r$, the leading angle of $L$ becomes strictly less than $r$ and $r$ can see it. Also, there is no counterclockwise movement of robots unless a robot approaches towards a multiplicity point. Thus, while $L$ is moving towards $r$, the leading angle remains strictly larger than $r$. Thus, $r$ remains a follower robot and does not move unless it sees a multiplicity point somewhere else. Hence, a multiplicity point will form where $r$ is located in the initial configuration either if $L$ is not stopped by the adversary before it finishes its move or another multiplicity point is formed somewhere else. 

     Suppose $L$ is stopped by the adversary while moving towards $r$. If before the next time activation no other robots moves, then on next time activation $L$ remains the true leader. And, this time it becomes a cognizant leader even if it was not in the initial configuration. Thus, if no other robot decides to move meanwhile, then eventually $L$ will reach $r$ to form a multiplicity point.

     Now, let us investigate where else possibly a multiplicity point can be formed. While $L$ is moving, a robot $r_1$ initially in the antipodal position of $r$ can become an undecided leader because it cannot see the leading angle of $L$. Except for $r_1$, all robots can see the leading angle of $L$. If $r_1$ becomes an undecided leader then it will find its clockwise neighbor safe because in its $\mathcal C_1$ configuration $L$ is the true leader. So, it will start moving towards its clockwise neighbor say, $r_2$ (See Figure~\ref{fig:55}). Thus, either at $r$ or $r_2$ or, at both the locations multiplicity points will form. Also, locations of $r$ and $r_2$ are antipodal to each other.

      \begin{figure}
    \centering
    \includegraphics[width=0.28\linewidth]{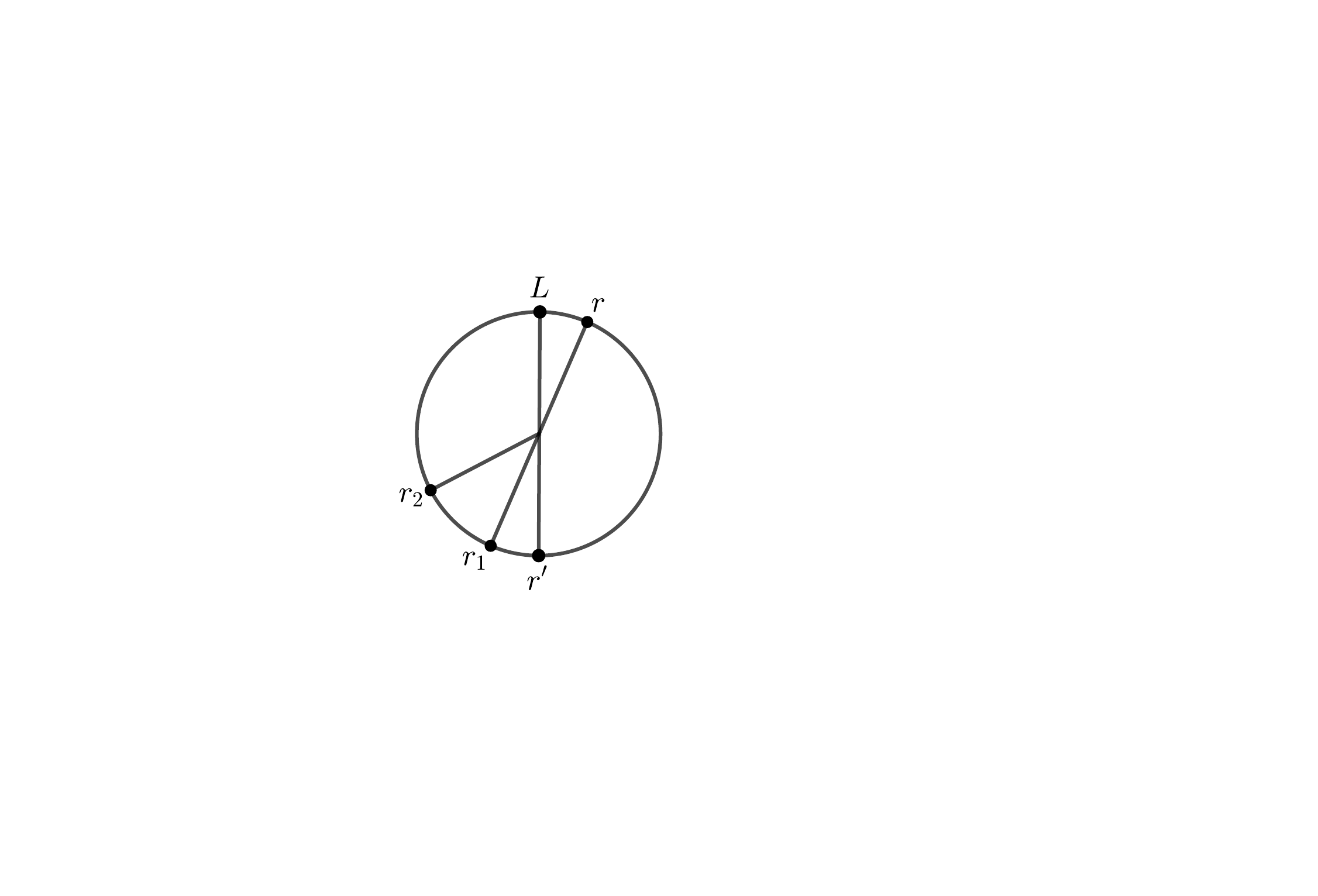}
    \caption{an image related to Lemma~\ref{twomult}}
    \label{fig:55}
    \end{figure}

     Since $L$ can see the leading angle of the $r_1$, so $L$ may lose its leadership if it is stopped while moving. Now, what if $L$ loses its leadership and $r_1$ is stopped before reaching $r_2$. In this case also, $r_1$ will become the true leader of the current configuration on next time activation. Also, every other robot can see the leading angle of the $L$ that is smaller than leading angle of them. So, no other robot will become an expected leader meanwhile.
     
     Now, if $r_1$ finds $r_2$ as a safe neighbor on next time activation then it will eventually move to $r_2$ to form a multiplicity point. Otherwise, the it will form a $configuration-C$. \qed

 \end{proof}

 \begin{lemma}\label{Nonecl}
 If the initial configuration is type configuration-C, then after finite time execution of Algorithm~\ref{algo:gathering} only one multiplicity point will form.
 \end{lemma}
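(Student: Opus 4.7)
The plan is to trace Algorithm~\ref{algo:gathering} step by step from an initial configuration-C and show that exactly one multiplicity point is eventually formed. Let $L$ denote the unique undecided leader whose first clockwise neighbour $f$ is unsafe. I would first note that every other robot is a follower; since a follower's activation is a no-op unless its counter-clockwise neighbour carries colour \texttt{verify}, no physical movement can occur until $L$ alters its own light. On $L$'s first activation the safe-neighbour test fails and, using Proposition~\ref{lemma3} together with the hypothesis that configuration-C admits a single expected leader, one shows that $\mathcal C_0(L)$ contains no additional undecided leader; therefore $L$ switches its colour to \texttt{verify}.

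The next step is to analyse $f$'s response. Let $L'$ and $f'$ be the antipodal points of $L$ and $f$. The half-open arc $[L', f')$ is visible to $f$ in its entirety (every point lies at angular distance strictly less than $\pi$, while $f'$ itself is excluded). The key technical claim to prove is that this arc contains a robot if and only if $L'$ is occupied, i.e.\ if and only if the true configuration is $\mathcal C_1(L)$. I would establish this using the angular-sequence restrictions of Propositions~\ref{lemma44} and~\ref{lemma3oo} to rule out stray robots in the narrow band $(L', f')$. Hence $f$'s posted colour \texttt{leader\_absent}/\texttt{leader\_present} faithfully encodes which of $\mathcal C_0(L)$ or $\mathcal C_1(L)$ is the true configuration.

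The argument then splits into two cases. If $L$'s antipodal is empty, $f$ posts \texttt{leader\_absent}; $L$ promotes itself to \texttt{cognizant} and moves towards $f$. Reasoning as in Lemma~\ref{Nonesure}, the leading angle $\lambda(L)$ decreases monotonically, $L$ retains the unique smallest angular sequence throughout its motion, no other robot becomes an expected leader, and by the $\delta$-step fairness $L$ eventually reaches $f$, producing the sole multiplicity point. If $L$'s antipodal is occupied by a robot $L_1$, then $f$ posts \texttt{leader\_present} and $L$ resets to \texttt{off}. Proposition~\ref{lemma30} now identifies $L_1$ as the true leader of the actual configuration, and Proposition~\ref{lemma3oo} prevents $L_1$'s clockwise neighbour from being antipodal to the clockwise neighbour of any other expected leader. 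Consequently the post-reset configuration is of type configuration-A from $L_1$'s viewpoint, and invoking Lemma~\ref{Nonesure} yields exactly one multiplicity point.

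The main obstacle is the reset case. After $L$ changes back to \texttt{off}, the follower $f$ still carries colour \texttt{leader\_present}; I would have to show this residual colour is either harmlessly overwritten on $f$'s next activation or at least cannot cause another robot to act incorrectly during $L_1$'s subsequent execution. The asynchronous non-rigid model adds a second layer of difficulty: the adversary may stop or delay any movement or snapshot, so I would invoke the same fairness arguments used in Lemma~\ref{Nonesure} to ensure that every partially completed verification round is eventually carried to a definitive conclusion, and that no transient intermediate configuration persuades a spurious robot to acquire expected-leader status.
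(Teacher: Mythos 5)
There is a genuine gap, and it is exactly where you locate your ``main obstacle.'' In configuration-C the unique expected leader $L$ is necessarily the \emph{true} leader of the configuration (the true leader of any rotationally asymmetric configuration is always an expected leader, and configuration-C has only one), and since $L$ is an \emph{undecided} true leader, Corollary~\ref{Cor1} forces its antipodal position to be empty. Hence the actual configuration is $\mathcal C_0(L)$, the arc $[L',f')$ your follower $f$ inspects contains no robot, and $f$ \emph{must} post \texttt{leader\_absent}. Your entire second case (antipodal occupied, \texttt{leader\_present}, reset to \texttt{off}) is vacuous under the hypothesis of the lemma, so the ``residual colour'' problem you leave unresolved never arises; this is the one-line observation the paper's proof rests on, and without it your argument is incomplete. (The reasoning you sketch for that branch is also not sound as stated: Proposition~\ref{lemma30} does not identify the antipodal robot $L_1$ as the true leader of $\mathcal C_1(L)$, and nothing you cite shows the post-reset configuration is of type A.)

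Two smaller remarks. First, your ``key technical claim'' that $[L',f')$ is occupied iff $L'$ is occupied is a legitimate point of care (the paper glosses over it), but the right source is the definition of an unsafe neighbor, not Propositions~\ref{lemma44} and~\ref{lemma3oo}: unsafety means the first clockwise neighbor of the true leader of $\mathcal C_1(L)$ sits at $f'$, which together with Proposition~\ref{lemma00} forces that true leader to be the phantom at $L'$ and the open arc $(L',f')$ to be empty. Second, your treatment of the realizable branch (leader\_absent, $L$ turns \texttt{cognizant} and walks to $f$) matches the paper's, but you should spell out, as the paper does, why the single robot antipodal to $f$ --- the only robot that cannot see $\lambda(L)$ during the walk --- cannot newly become an expected leader: a clockwise move of $L$ that made it one would have made it one already in the initial configuration.
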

 \begin{proof}
 When the initial configuration is of type configuration-C then the only expected leader, say $r$ is the undecided leader such that its first clockwise neighbor, say $r_1$, is not safe. In this case, according to the Algorithm~\ref{algo:gathering} the undecided leader changes its color to \texttt{verify}. After this, when first time the $r_1$ is activated, then on observing the color \texttt{verify} of its counterclockwise neighbor, it does the checking according to the Algorithm. In this configuration all the robots except $r$ are follower, so they do not move. So, on verification $r_1$ will must change its color to \texttt{leader\_absent}. Next time, when $r$ again gets activated, it observes the color \texttt{leader\_absent} of its clockwise neighbor. Then it changes its color to \texttt{cognizant} and move to its clockwise neighbor $r_1$. Even if $r$ gets stopped in the middle while moving, on next time activation $r$ will remain a true but an undecided leader which finds its clockwise neighbor unsafe. Thus, configuration type remains the same. But every time when $r$ gets activated it can see the \texttt{leader\_absent} of its clockwise neighbor. Thus, eventually $r$ will reach $r_1$. While $r$ is moving, except the antipodal robot of $r_1$, say $r_2$, all robots can see the leading angle of $r$, which is strictly smallest in the configuration. Also, $r_2$ will not become expected leader after $r$ starts its move. Because if a clockwise move of $r$ makes $r_2$ an expected leader then it was already an expected leader in the initial configuration, which is untrue.
 Hence, no other robot will become an expected leader and only one multiplicity point will form at the location of $r_1$. \qed
 \end{proof}

 \begin{lemma}\label{Noneslonecl}
 If the initial configuration is type configuration-B, then after finite time execution of Algorithm~\ref{algo:gathering} at least one and at most two multiplicity point will form. If two multiplicity points are formed then they will be non antipodal to each other.
 \end{lemma}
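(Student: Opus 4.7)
The plan is to handle the two subtypes of Configuration-B separately, following the template of Lemma~\ref{Nonesure} in each case. In both subtypes the expected leaders are the only robots that can initiate movement: a follower's leading angle is strictly larger than that of any expected leader, so the leading-angle-based branches in Algorithm~\ref{algo:gathering} keep followers quiescent, and the follower clauses react only to a neighbor colored \texttt{verify}. Since Proposition~\ref{lemma3} caps the number of expected leaders at two, the number of multiplicity points that can ever form is bounded by two, each contributed at the clockwise first neighbor of one of the expected leaders. The ``at least one'' half of the claim will follow from the $\delta$-progress / persistence argument of Lemma~\ref{Nonesure} once I verify that at least one leader always commits to a move.

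For Configuration-BII, let $L$ be the true leader and $r$ the second undecided leader. Corollary~\ref{Cor1} gives that $L$'s antipodal point is empty and Corollary~\ref{Cor2} that $r$'s is occupied; in particular the actual configuration equals both $\mathcal C_0(L)$ and $\mathcal C_1(r)$, and $L$ is the true leader of $\mathcal C_1(r)$. Proposition~\ref{lemma3oo} then forces $r$'s clockwise first neighbor to be safe, so $r$ will move. Whether $L$'s clockwise first neighbor is safe depends on the true leader of $\mathcal C_1(L)$, which Proposition~\ref{lemma00} localizes on the clockwise arc between $L$ and its antipodal. If $L$ finds safety, both leaders move and two multiplicities form at non-antipodal positions (by Proposition~\ref{lemma3oo}); otherwise, the verify branch for $L$ is disabled by the guard ``$\mathcal C_0(L)$ contains no other undecided leader'' (which fails here, witnessed by $r$), so $L$ stays put and only $r$'s multiplicity is created.

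For Configuration-BI, let $L$ be the cognizant leader and $r$ the undecided leader; Corollary~\ref{Cor2} ensures $r$'s antipodal is occupied. $L$ behaves exactly as in Lemma~\ref{Nonesure} and eventually deposits a multiplicity at its clockwise first neighbor $p_L$. The analysis for $r$ splits on whether its clockwise first neighbor $p_r$ is safe. If safe, then since the actual configuration is $\mathcal C_1(r)$ with true leader $L$, the safety condition itself asserts that $p_L$ and $p_r$ are non-antipodal, and $r$ also moves to $p_r$, producing a second, non-antipodal multiplicity. If not safe, then by definition $p_L$ and $p_r$ are antipodal, and $r$ enters the verify protocol. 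The follower at $p_r$ evaluates the interval $[s_0,s)$, which in this case is the clockwise arc from $\mathrm{antipodal}(r)$ to $p_L$; a short computation using $L$ having the smallest leading angle places $L$ inside that arc, so the follower reports \texttt{leader\_present} and $r$ resets its color without moving. Before $r$ could attempt to move again, $L$'s multiplicity at $p_L$ is born and $r$ is taken over by the multiplicity-handling clauses of Algorithm~\ref{algo:gathering}.

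The hard part will be the verification in the unsafe sub-case of Configuration-BI under non-rigid motion and full asynchrony. Specifically, I must check that during every interleaving of activations, $L$ (or its partial displacement) remains inside the interval $[s_0,s)$, so that the follower at $p_r$ persistently reports \texttt{leader\_present} until the multiplicity at $p_L$ is formed; otherwise $r$ could cross over to the \texttt{leader\_absent} branch and create the antipodal multiplicity that the lemma forbids. A secondary subtlety is making the safety analysis for $L$ in Configuration-BII fully rigorous, since the true leader of $\mathcal C_1(L)$ need not be $r$; this reduces to careful angular bookkeeping using Propositions~\ref{lemma00} and \ref{lemma44}, but must be treated explicitly.
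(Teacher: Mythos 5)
Your decomposition (BI versus BII, then safe versus unsafe clockwise neighbor) matches the paper's, and several of your static observations are correct and even slightly more careful than the paper's text: that Proposition~\ref{lemma3oo} makes $r$'s neighbor safe in BII because the actual configuration is $\mathcal C_1(r)$ with true leader $L$; that ``unsafe'' in BI is exactly the statement that $p_L$ and $p_r$ are antipodal; and that the line-10 guard blocks the verify branch for the true leader in BII. However, the proposal has genuine gaps, and they are precisely where the paper spends most of its effort. First, you do not treat the adversarial dynamics at all in the ``both leaders move'' cases: under non-rigid motion a leader can be stopped anywhere, after which it may lose leadership, the other leader may become cognizant, or the robot antipodal to a moving leader's \emph{destination} may newly become an undecided leader (this is how a second, potentially dangerous multiplicity can arise, exactly as in Lemma~\ref{Nonesure}). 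Your opening claim that followers stay quiescent because their leading angle is ``strictly larger'' is not quite right (a follower can tie the true leader's leading angle, and the robot antipodal to the destination cannot see the moving leader's leading angle at all), so the assertion that only the two initial expected leaders ever move needs a real argument. Relatedly, you must rule out that concurrent partial moves of $L$ and $r$ create a rotationally symmetric configuration, which would invalidate the entire leader machinery; the paper proves this explicitly and your proposal never mentions it. Your conclusion that the safe cases ``produce a second, non-antipodal multiplicity'' also overstates what can be guaranteed -- a stopped $r$ may lose its leadership and never reach $p_r$, so the outcome is one \emph{or} two multiplicities -- though this does not by itself contradict the lemma.

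Second, in the unsafe BI case you explicitly defer the interleaving analysis, but that analysis is the content of the proof, and your sketch of it contains an unjustified step: ``before $r$ could attempt to move again, $L$'s multiplicity at $p_L$ is born'' is a timing assumption that a fully asynchronous adversary need not honor. The correct resolution does not need timing: since $L$ and $r$ are antipodal here, $L$'s entire trajectory toward $p_L$ lies in $[s_0,s)$ except for the endpoint $s=p_L$ itself, so whenever the follower at $p_r$ performs its check it either sees a robot in $[s_0,s)$ (and reports \texttt{leader\_present}, causing $r$ to reset to \texttt{off}) or $L$ has already completed its move, in which case a multiplicity already exists at $p_L$, is visible to $r$ (as $p_L$ is not antipodal to $r$), and the multiplicity branch of Algorithm~\ref{algo:gathering} preempts the undecided-leader logic so $r$ never moves to $p_r$. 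Without this case split the antipodal-multiplicity scenario you worry about is not actually excluded. In short: the skeleton is right, but the asynchrony/non-rigidity arguments that make the lemma true are either missing or acknowledged as unverified.
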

 \begin{proof}
      In configuration-B there are two expected leaders. Then there are two possibilities, either there is one cognizant leader and one undecided leader, or, both are undecided leaders. Let $L$ and $r$ be the cognizant leader and undecided leader of the initial configuration. Let the first clockwise neighbors of $L$ and $r$ be $r_1$ and $r_2$ respectively. 

      \ding{111} \textit{One cognizant leader and one undecided leader:} 

        \ding{118} \textit{Case-I} (When $r_2$ is not a safe clockwise neighbor of $r$) In this case, $L$ and $r$ must be antipodal. Also, $r_1$ and $r_2$ must be antipodal to each other (See Figure~\ref{fig:8}). According to Algorithm~\ref{algo:gathering}, $L$ on activation, decides to move to reach $r_1$. If $L$ starts moving before $r$ is activated, then leading angle of $L$ becomes strictly smallest. The $r$ can see $\lambda(L)$. Thus, every robot in the system can see $\lambda(L)$ except for $r_2$. But $r_2$ does not become an expected leader because it has lost leadership with $r$. Thus, if $L$ starts moving before $r$ is activated, no robot will move and a multiplicity will form at $r_1$. Now, if $L$ is stopped in the middle, then also on the next time activation $L$ remains the true leader of the configuration. Thus, a multiplicity forms at $r_1$.

         \begin{figure}[ht!]
     \centering
     \includegraphics[width=0.3\linewidth]{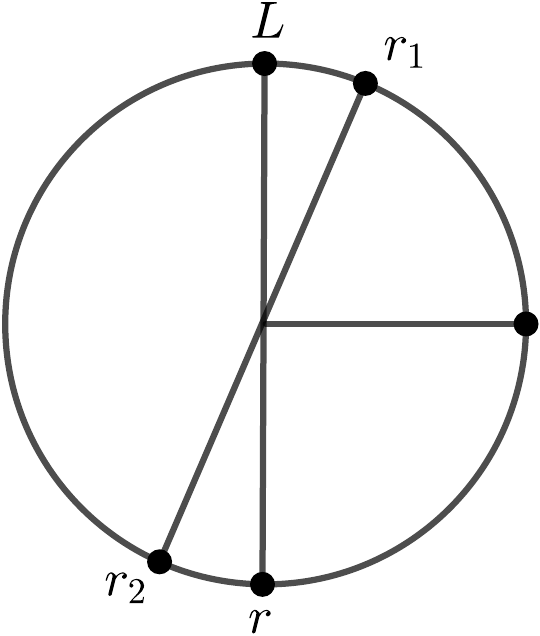}
     \caption{One cognizant leader $L$ and one undecided leader $r$ that finds its clockwise neighbor unsafe as $r_1$ and $r_2$ are antipodal to each other.}
     \label{fig:8}
 \end{figure}

    If $r$ wakes up before $L$ starts moving, it will change its color to \texttt{verify}. Upon seeing it $r_2$ will verify. Meanwhile, the $L$ might get activated. If $L$ wakes up then it will move to $r_1$ make a multiplicity point in the location of $r_1$. If $L$ did not reach $r_1$. Then on verification by $r_2$, it will see the true leader and turn on the light \texttt{leader\_present}. Then the robot $r$, on the next time activation, will turn its color \text{off}. And, only one multiplicity point will form at the position of $r_2$. If already $L$ reaches at $r_1$, then $r_2$ cannot see the leader. Thus, it will turn the color \texttt{leader\_absent}. But by that time a multiplicity point has been formed at the location $r_1$ which is visible by $r$. Also, before $L$ reaches $r_1$, no other robot will become an expected leader once $L$ starts to move. Thus, only one multiplicity point will form at the location of $r_1$. 

    \ding{118} \textit{Case-II} (When $r_2$ is a safe clockwise neighbor of $r$) In this case, if $L$ and $r$ get activated while the other has not started moving then both will move to their respective clockwise neighbors. Also, note that $r_1$ and $r_2$ are non antipodal to each other. First we show that no rotational symmetric configuration forms while the leaders are moving.

   If a rotationally symmetric configuration is created then robot $r$ has to move. Because if only $L$ moves then its angle sequence remains the strictly smallest one. Suppose that a rotationally symmetric configuration $\mathcal{C}_{sym}$ has been formed. If $L$ is moved to form $\mathcal{C}_{sym}$, then there will be another robot $L'\ne r$ other than $L$ that has same angle sequence, thus leading angle with $L$. Then it will constitute a contradiction because otherwise $L'$ would be the true leader in the initial configuration. Thus, $\mathcal{C}_{sym}$ can only be formed if adversary only moves $r$. Thus, in $\mathcal{C}_{sym}$ there is another robot $L'$ with same angle sequence as $L$. Let us consider two cases. First, let $\lambda(L)=\lambda(r)$ in the initial configuration. Then after any move the leading angle of $r$ becomes the strictly smallest. Thus, the new configuration must be rotationally asymmetric. Next, let $\lambda(L)<\lambda(r)$. In this case, $L$ and $L'$ both have strictly smaller leading angle than $r$. But, $r$ cannot miss the leading angle of the both $L$ and $L'$. Thus, $r$ would not be an expected leader in the initial configuration. Hence, adversary cannot make a rotationally symmetric configuration. 

    Let us note whether any other follower robot will become an expected leader. First suppose that $\lambda(L)=\lambda(r)$. Then $L$ and $r$ must be non antipodal. If one of the expected leader moves, then their leading angle is smaller than all the other follower robots. And the follower robots can see one of the expected leaders' leading angle. Thus, no expected leader will not become an expected leader. Next let $\lambda(L)<\lambda(r)$, then there are two cases: either $L$ and $r$ are antipodal or not antipodal. If $L$ and $r$ are not antipodal, then $r$ must be antipodal to $r_1$. In this case, all follower robot can see $\lambda(L)$. Thus, no follower robot will become an expected leader. If $L$ and $r$ are antipodal, then also all follower robot can see $\lambda(L)$. Thus, in all cases no follower robot will turn into an expected leader when the expected leaders are moving.

    Next, we show that no deadlock arises even if adversary stops an expected leaders in the middle of its move. First, we consider the case when the $\lambda(L)=\lambda(r)$. Then there are two possible cases. Either $r$ is antipodal to $r_1$ or not. If $r$ and $r_1$ are antipodal to each other, then $L$ can see the leading angle of the $r$ but $r$ cannot see the leading angle of $L$ (See Figure~\ref{fig:1}). So, if $r$ moves before $L$ is activated then $L$ will lose its leadership. And, $r$ becomes only expected leader. Thus, either Configuration of type $A$ or $C$ can occur if $r$ is stopped in the middle. From Lemma~\ref{Nonesure} and Lemma~\ref{Nonecl}, a multiplicity point will be formed at $r_2$. If $L$ starts to move and stopped in the middle then then it remains a cognizant leader unless it loses its leadership with $r$. Thus, either a multiplicity point will form at the location of $r_1$ or location of $r_2$ or at both the locations.

  \begin{figure}[ht!]
\centering
\begin{subfigure}{.5\textwidth}
  \centering
  \includegraphics[width=0.62\linewidth]{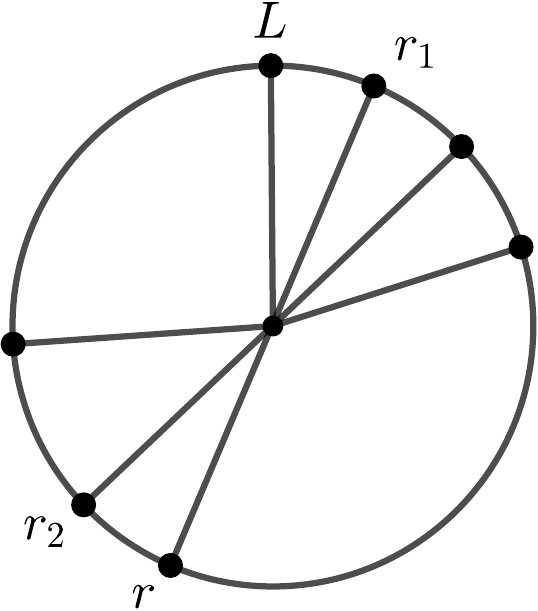}
     \caption{When $r$ and $r_1$ are antipodal}
     \label{fig:1}
\end{subfigure}%
\begin{subfigure}{.5\textwidth}
  \centering
  \includegraphics[width=0.65\linewidth]{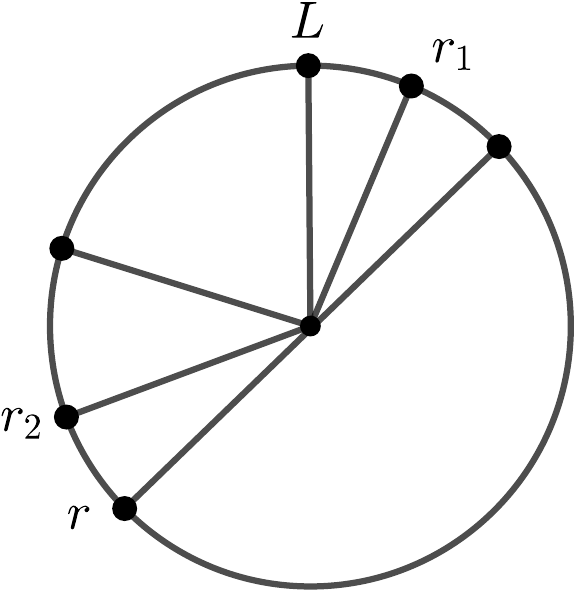}
     \caption{When $r$ and $r_1$ are not antipodal}
     \label{fig:2}
\end{subfigure}
\caption{A cognizant leader $L$ and an undecided leader $r$}
\label{fig:3}
\end{figure}
    
    If $r$ is not antipodal top $r_1$, then $L$ and $r$ can see each others leading angles (See Figure~\ref{fig:2}). Thus, depending on different scenario one of them can lose its leadership. In this case, if $L$ remains an expected leader then it will remain a cognizant leader. If $r$ remains an expected leader, then either it will become a true leader or an undecided leader. If the new configuration will be of type $A$ or $C$ then from Lemma~\ref{Nonesure} and Lemma~\ref{Nonecl}, a multiplicity point will be formed at $r_2$. If the new configuration is repeatedly of type $B$, then it will be with $\lambda(L)=\lambda(r)$ because $L$ and $r$ can see each others leading angle. In this case after a finite time a multiplicity will be formed either at the location of $r_1$ or $r_2$ or at both.

    Next, suppose that $\lambda(L)<\lambda(r)$. here either $L$ and $r$ are antipodal to each other or not (See Figure~\ref{fig:4}). If $L$ and $r$ are not antipodal then $r$ must be antipodal to $r_1$. In this if $L$ remains an expected leader then it will be a cognizant leader and a multiplicity point will form at the location of $r_1$. For the robot $r$ it may become the true leader or lose its leadership altogether, because after making a move it will be able to see the leading angle of the robot $L$. So, if after a move $r$ does not lose its leadership then either Configuration of type $A$ or type $C$ will occur. Thus, from Lemma from Lemma~\ref{Nonesure} and Lemma~\ref{Nonecl}, a multiplicity point will be formed at $r_2$.

\begin{figure}[ht!]
\centering
\begin{subfigure}{.5\textwidth}
  \centering
  \includegraphics[width=.65\linewidth]{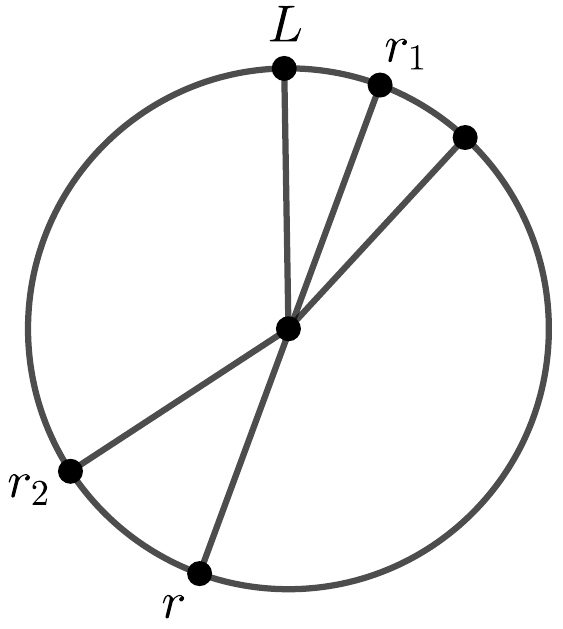}
  \caption{When $L$ and $r$ are not antipodal}
  \label{fig:4a}
\end{subfigure}%
\begin{subfigure}{.5\textwidth}
  \centering
  \includegraphics[width=.65\linewidth]{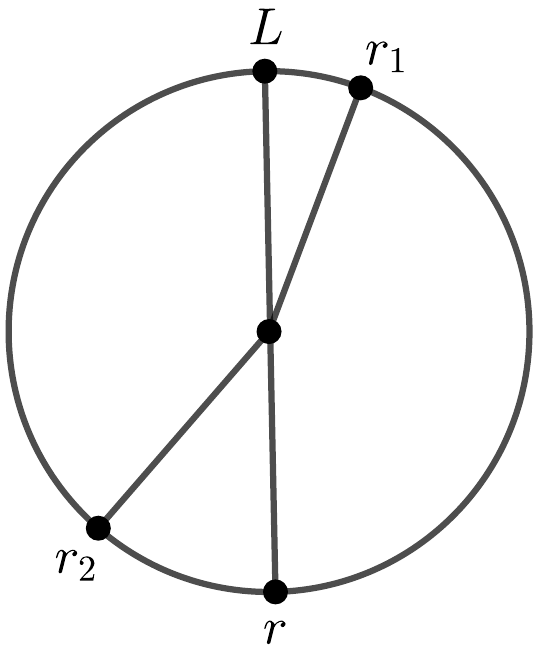}
  \caption{When $L$ and $r$ are antipodal}
  \label{fig:4b}
\end{subfigure}
\caption{Cognizant leader $L$ and undecided leader $r$}
\label{fig:4}
\end{figure}

    Now we consider the case when $\lambda(L)>\lambda(r)$ with $L$ and $r$ antipodal to each other. In this case, $L$ may lose its leadership or may become an undecided leader. If $L$ moves, the $r$ might lose its leadership. Otherwise, the robot $r$ may become a cognizant leader or an undecided leader. Thus, it may take the configuration of type $A$ or $C$  and can repeat the configuration type $B$ but this time both leaders will become undecided leaders. Configuration with two undecided leader is considered in the next case.

    \ding{111}\textit{Two undecided leaders:} From Proposition~\ref{lemma3oo}, $r_1$ and $r_2$ are non antipodal to each other. In this case, the true leader $L$ does not move. Because, for $L$ the condition at line~10 of the Algorithm. Becomes clockwise neighbor of $L$ is not safe and $\mathcal{C}_0(L)$, which is the current configuration, has an another undecided leader. For the another undecided leader $r$, it finds its clockwise neighbor safe. So, only $r$ will decide to move. Let us divide the further discussion into two cases: (i) $\lambda(L)=\lambda(r)$ and (ii) $\lambda(L)<\lambda(r)$. 

    For the first case (See Figure~\ref{fig:3}), as soon as $r$ starts moving, $r$ becomes the true leader. Also, since $L$ can see $\lambda(r)$, $L$ will lose its leadership. Next, only a follower robot at the antipodal position of $r_2$ cannot see the $\lambda(r)$. But will lose its leadership with $L$. Thus, no other follower robot will become an expected leader when $r$ starts moving. Next, suppose $r$ is stopped in the middle. Then new configuration will be either a configuration of type $A$ or of type $C$. From Lemma~\ref{Nonesure} and Lemma~\ref{Nonecl}, after a finite time a multiplicity will form at the location of $r_2$.

 \begin{figure}[ht!]
\centering
\begin{subfigure}{.5\textwidth}
  \centering
  \includegraphics[width=.65\linewidth]{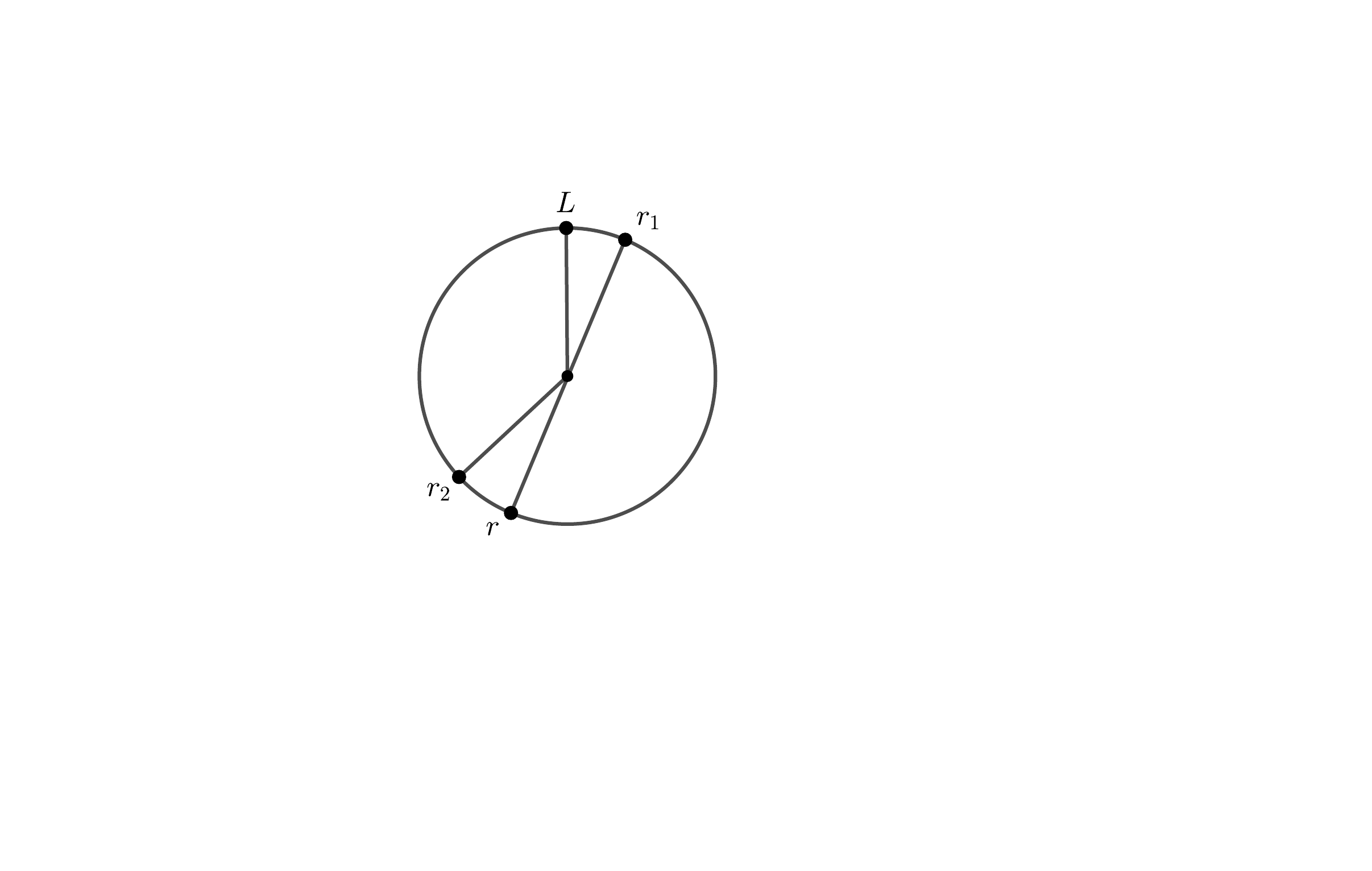}
  \caption{When $\lambda(L)=\lambda(r)$}
  \label{fig:3a}
\end{subfigure}%
\begin{subfigure}{.5\textwidth}
  \centering
  \includegraphics[width=.65\linewidth]{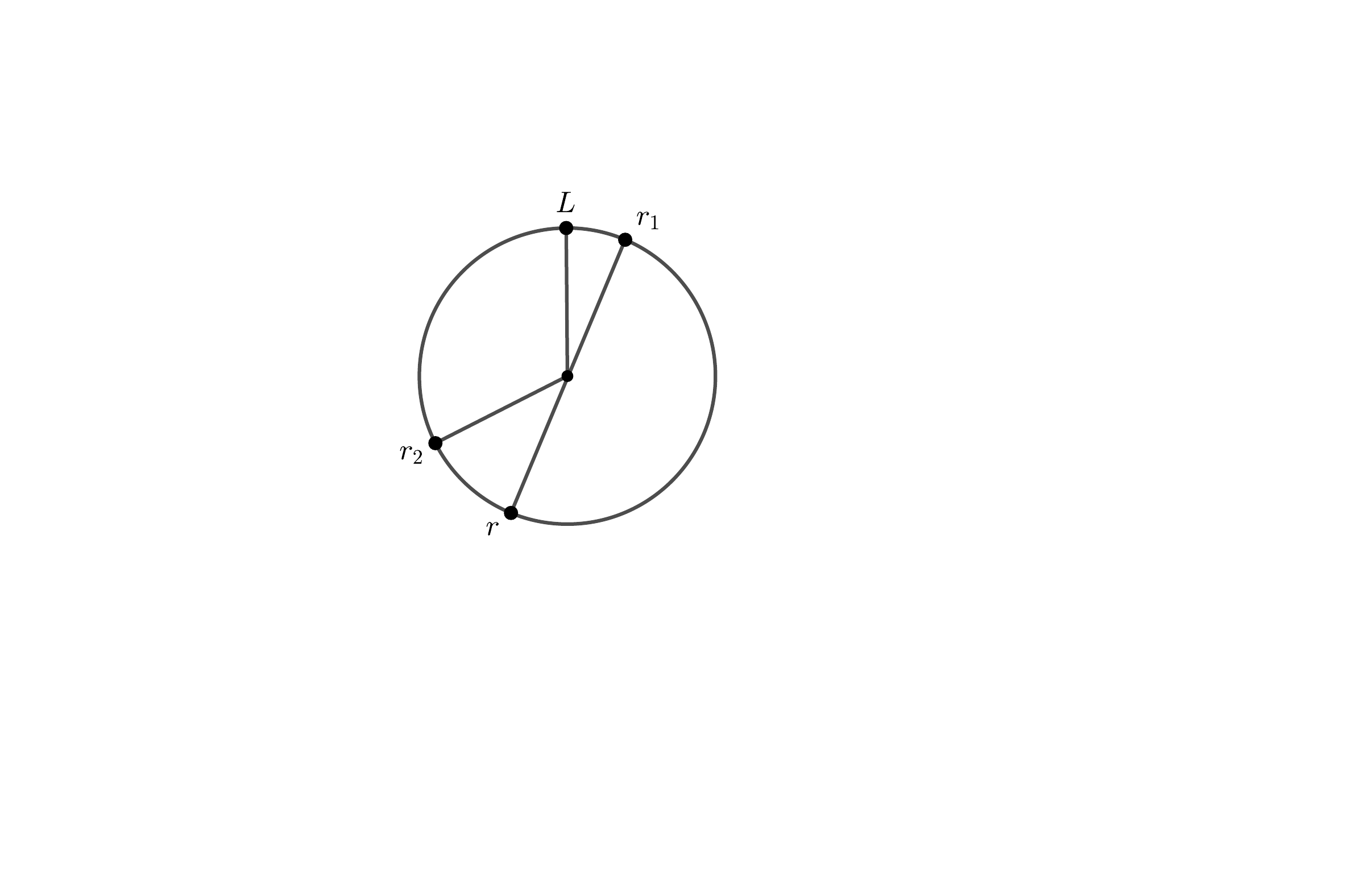}
  \caption{When $\lambda(L)<\lambda(r)$}
  \label{fig:3b}
\end{subfigure}
\caption{Two undecided leaders $L$ and $r$}
\label{fig:3}
\end{figure}

    For the second case, as soon as $r$ starts to move $L$ becomes a cognizant leader as long as $\lambda(L)<\lambda(r)$. If $\lambda(L)\ge\lambda(r)$, then since $L$ and $r$ can see each others leading angles. Thus, only one will remain an expected leader. Thus, new configuration will be either of type $A$ or $C$. From Lemma~\ref{Nonesure} and Lemma~\ref{Nonecl}, after a finite time a multiplicity will form at the location of $r_1$ or at the location of $r_2$, or at both the locations. \qed

    \end{proof}

 \begin{theorem}\label{twomult}
 From any rotationally asymmetric configuration with no multiplicity point, by finite time execution of Algorithm~\ref{algo:gathering} the robots can form at least one and at most two multiplicity points, and then all robots gather at a point on the circle.
 \end{theorem}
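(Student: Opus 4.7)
The plan is to combine the three case-analysis lemmas for the formation-of-multiplicity phase with a direct analysis of the multiplicity-absorbing clauses of Algorithm~\ref{algo:gathering} for the subsequent gathering phase.

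First, I partition any rotationally asymmetric no-multiplicity configuration into the types $A$, $B$, $C$ introduced before Lemma~\ref{Nonesure}; by Proposition~\ref{lemma3} and the enumeration that follows it, these are exhaustive. Applying Lemma~\ref{Nonesure} to type $A$, Lemma~\ref{Noneslonecl} to type $B$, and Lemma~\ref{Nonecl} to type $C$ (noting that type $A$ may first convert into type $C$, in which case Lemma~\ref{Nonecl} then applies), after finitely many activations the configuration contains at least one and at most two multiplicity points; moreover the lemmas guarantee non-antipodality whenever two multiplicity points appear.

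Next, I would show that two non-antipodal multiplicity points $M_1$ and $M_2$ collapse into one. Since $\alpha(M_1,M_2)+\alpha(M_2,M_1)=2\pi$ and neither summand equals $\pi$, exactly one of them---say $\alpha(M_1,M_2)$---is strictly less than $\pi$. The last branch of Algorithm~\ref{algo:gathering} then forces each robot activated at $M_1$ to move clockwise toward $M_2$; the $\delta$-progress assumption together with the non-crossing of robots ensures monotone progress along the arc $[M_1,M_2]$, and the ``neighbor is multiplicity'' clause pulls any robot stranded between $M_1$ and $M_2$ into $M_2$ once $M_2$ becomes its immediate clockwise neighbor. In finitely many activations $M_1$ is emptied into $M_2$, leaving a single multiplicity point $M$.

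With a unique multiplicity point $M$, I would induct on the number of robots not yet at $M$. The immediate clockwise and counter-clockwise neighbors of $M$ satisfy the ``neighbor is multiplicity'' premise and move to $M$; after absorption the next robots on each side become the new neighbors of $M$ and are triggered on their next activation. Since robots cannot cross on the circle, the circular order is preserved and the peeling proceeds inward from both sides until every robot reaches $M$, completing the gathering.

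The main obstacle I anticipate is the merging phase under adversarial non-rigid movement: a robot leaving $M_1$ may be halted strictly closer to $M_1$ than to $M_2$, and the ``move to the closer neighboring multiplicity'' rule could in principle cause oscillation back to $M_1$. The key saving observation is that once the occupancy of $M_1$ drops below two, $M_1$ is no longer a multiplicity, so any robot subsequently stranded in $[M_1,M_2]$ sees only $M_2$ as a neighboring multiplicity and advances monotonically; formalizing this together with the chain formation along the arc and the $\delta$-progress guarantee is the delicate core of the finite-time convergence argument.
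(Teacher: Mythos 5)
Your overall architecture matches the paper's: invoke Lemmas~\ref{Nonesure}, \ref{Nonecl} and \ref{Noneslonecl} to get one or two non-antipodal multiplicity points, then use the multiplicity-handling branches of Algorithm~\ref{algo:gathering} to absorb everyone and merge. (The paper does the two sub-phases in the opposite order --- first all singletons are absorbed into the one or two multiplicity points, and only then is the lighter multiplicity emptied into the other --- but since the relevant clauses of the algorithm fire independently of that ordering, this is a presentational rather than substantive difference. Your worry about oscillation of a robot stranded between $M_1$ and $M_2$ is legitimate, but the paper's own proof does not treat it any more carefully than you do.)

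There is, however, one concrete gap in your absorption argument. Your induction says the immediate clockwise and counter-clockwise neighbors of the unique multiplicity point $M$ ``satisfy the `neighbor is multiplicity' premise and move to $M$,'' and that the peeling proceeds until every robot reaches $M$. This fails for a robot located exactly antipodal to $M$: under the $\pi$-visibility model the point at angular distance exactly $\pi$ is invisible, so this robot never sees $M$, never perceives that its neighbor is a multiplicity point, and --- once all other robots have been absorbed --- sees nothing at all. Such a robot genuinely arises (e.g.\ the robot $r_1$ in the proof of Lemma~\ref{Nonesure} can end up antipodal to the multiplicity formed at $r$). The algorithm handles this via its final branch (``if no robot is visible, move $\pi/2$ clockwise''), after which $M$ becomes visible and the robot is absorbed; the paper's proof explicitly invokes this step, and your argument must as well, otherwise the induction does not terminate with all robots at $M$.
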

 \begin{proof}
Let $\mathcal{C}$ be a rotationally asymmetric configuration with no multiplicity point. Then there can be three exhaustive possible configurations, Configuration-A, Configuration-B and Configuration-C.
From lemma~\ref{Nonesure}, lemma~\ref{Nonecl}, and lemma~\ref{Noneslonecl}, we can say that in any type of configuration at least one and at most two multiplicity points will form. On formation of the multiplicity point, if a robot sees a multiplicity point, from line~27 of the algorithm robots one by one move to their closet multiplicity point. If only one multiplicity is formed, then all robots will eventually gather at there except one robot at the antipodal position of the multiplicity point. If that robot does not move at all, then eventually it will not see any other robot. Then according to line~33 of the algorithm, it will move $\pi/2$ in clockwise direction. After this it will be able to see the multiplicity point. If a robot is at a multiplicity point and can not see any other multiplicity point then it does not move anywhere further. If there are two multiplicity points created, then from Lemma~\ref{Nonesure} and Lemma~\ref{Noneslonecl}. the multiplicity points are non antipodal. Thus, each robot will be able to see at least one of the multiplicity point. Thus, eventually all robots will gather at those two multiplicity points. After this according to the line~29-31, robots of the one multiplicity point will move to another multiplicity point. If two robots simultaneously starts moving towards another multiplicity point, then a robot might see three multiplicity point in its view. But in such scenario the robot does nothing. Hence, after a finite time all robot will gather at a point. \qed
 
 \end{proof}

  Hence, we can conclude the following theorem.
  \begin{theorem}
 There exists a gathering algorithm that gathers any set of robots with finite communication and $\pi$ visibility from any initial rotationally asymmetric configuration under an asynchronous scheduler.   
 \end{theorem}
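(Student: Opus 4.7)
The plan is to exhibit Algorithm~\ref{algo:gathering} as the required algorithm and to argue that, when executed on any initial rotationally asymmetric configuration with no multiplicity point under the $\mathcal{FCOM}$ model with $\pi$-visibility and a fully asynchronous scheduler with non-rigid $\delta$-bounded moves, all robots eventually meet at a single point on $C$. The argument is a direct appeal to Theorem~\ref{twomult}, which does all the heavy lifting. What remains is a short tying-together step plus a verification that the hypotheses of Theorem~\ref{twomult} are met at the start and that the asynchronous/non-rigid scheduling does not break its guarantees.

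First I would observe that the initial configuration falls, by definition, into exactly one of Configuration-A, Configuration-B (sub-divided into BI, BII), or Configuration-C, since Proposition~\ref{lemma3} shows at most two expected leaders and the four cases listed after Proposition~\ref{lemma3oo} are exhaustive. Invoking Lemma~\ref{Nonesure}, Lemma~\ref{Nonecl}, and Lemma~\ref{Noneslonecl}, I obtain that after a finite number of asynchronous activations, at least one and at most two multiplicity points have appeared, and if two have appeared then they are not antipodal. This is precisely the content of Theorem~\ref{twomult}; I would simply cite it.

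Second I would argue the second phase, in which robots coalesce into a single multiplicity point. If only one multiplicity point $m$ is formed, every robot except possibly one at the antipode of $m$ sees $m$ and will move to it via line~27 of Algorithm~\ref{algo:gathering}; a robot that is exactly antipodal to $m$ sees no robot at all, and by the fallback branch (line~33) it rotates by $\pi/2$ clockwise, after which $m$ enters its visibility and it joins the gathering point on its next activation. If two non-antipodal multiplicity points $m_1,m_2$ are formed, then every robot sees at least one of them (since the angular gap between any non-antipodal pair is $<\pi$), so all robots first gather at $m_1\cup m_2$; then lines~29--31 make robots at one multiplicity point migrate toward the other, yielding a single gathering point in finite time. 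Three simultaneously visible multiplicity points can only be a transient artefact of concurrent moves between two points, and in that case the robot takes no action, so no spurious dispersion is introduced.

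The main obstacle I expect is not the high-level case analysis but the bookkeeping that $\mathcal{FCOM}$ (no self-visible light, no persistent memory beyond the externally visible colour) together with non-rigid moves does not allow an adversary to create a deadlock or an unintended symmetric configuration during any of the transitions between the Configuration-A/B/C types. This is already handled inside the proofs of Lemmas~\ref{Nonesure}--\ref{Noneslonecl}, where the crucial points are (i) that once an expected leader starts moving, its leading angle becomes strictly smallest locally so no follower is promoted to an expected leader, (ii) that the \texttt{verify}/\texttt{leader\_present}/\texttt{leader\_absent} handshake reliably resolves the antipodal ambiguity for an undecided leader with an unsafe neighbour, and (iii) that even if a moving leader is stopped by the adversary, the $\delta$-fairness assumption guarantees progress and the configuration type either remains the same or transitions to a type handled by a different lemma. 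Collecting these observations, Theorem~\ref{twomult} applies unchanged, and the statement follows.
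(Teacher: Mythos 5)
Your proposal is correct and follows essentially the same route as the paper: the paper derives this theorem directly as a consequence of Theorem~\ref{twomult}, whose proof is exactly the two-phase argument you reproduce (case analysis via Lemmas~\ref{Nonesure}, \ref{Nonecl}, \ref{Noneslonecl} to form at most two non-antipodal multiplicity points, followed by the coalescing phase using the multiplicity-handling and empty-view branches of Algorithm~\ref{algo:gathering}). No substantive difference in decomposition or key lemmas.
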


 \section{$\mathcal{FCOM}$ is not as powerful as $\mathcal{FSTA}$}\label{sec:comp}

In this section show that model $\mathcal{FCOM}$ is not as strong as $\mathcal{FSTA}$ under limited visibility. For sake of fairness, we keep the underlying topology and visibility model same as concerned in the proposed work. We define a problem \textsc{3to7} in the following definition. 

\begin{definition}
    Let there be two robots $r_1$ and $r_2$ placed on a circle such that $\alpha(r_1,r_2)=\pi/2$. The problem \textsc{3to7} asks only the robot $r_2$ to move clockwise to occupy the position such that $\alpha(r_1,r_2)$ becomes $7\pi/6$ (See Figure~\ref{fig:3to9}). The robots have $\pi$ visibility model and the robots movements are non-rigid. The robots agree on the clockwise direction.
\end{definition}

 \begin{figure}[ht!]
\centering
\begin{subfigure}{.5\textwidth}
  \centering
  \includegraphics[width=0.62\linewidth]{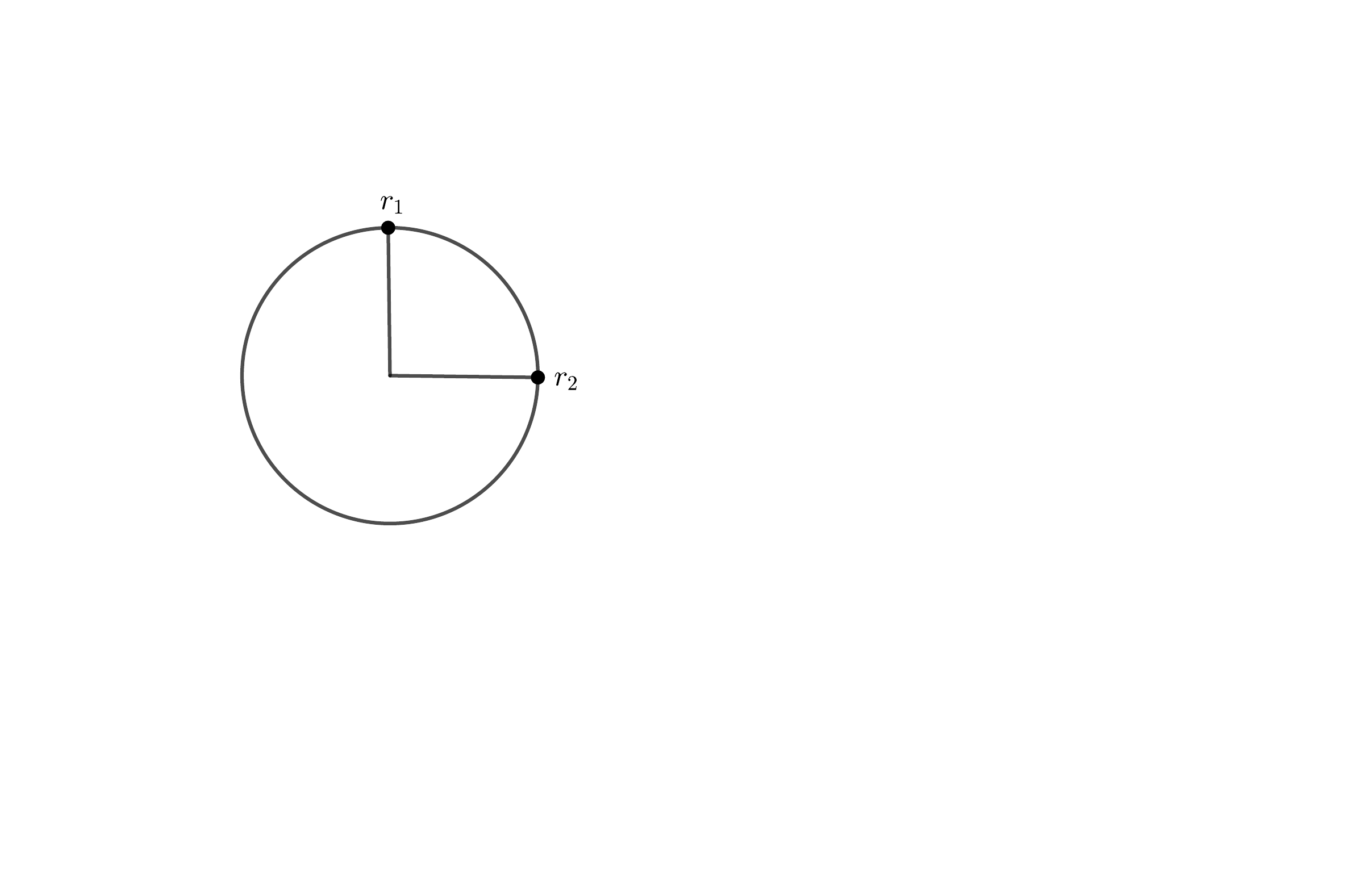}
     \caption{Initial configuration}
     \label{fig:31}
\end{subfigure}%
\begin{subfigure}{.5\textwidth}
  \centering
  \includegraphics[width=0.57\linewidth]{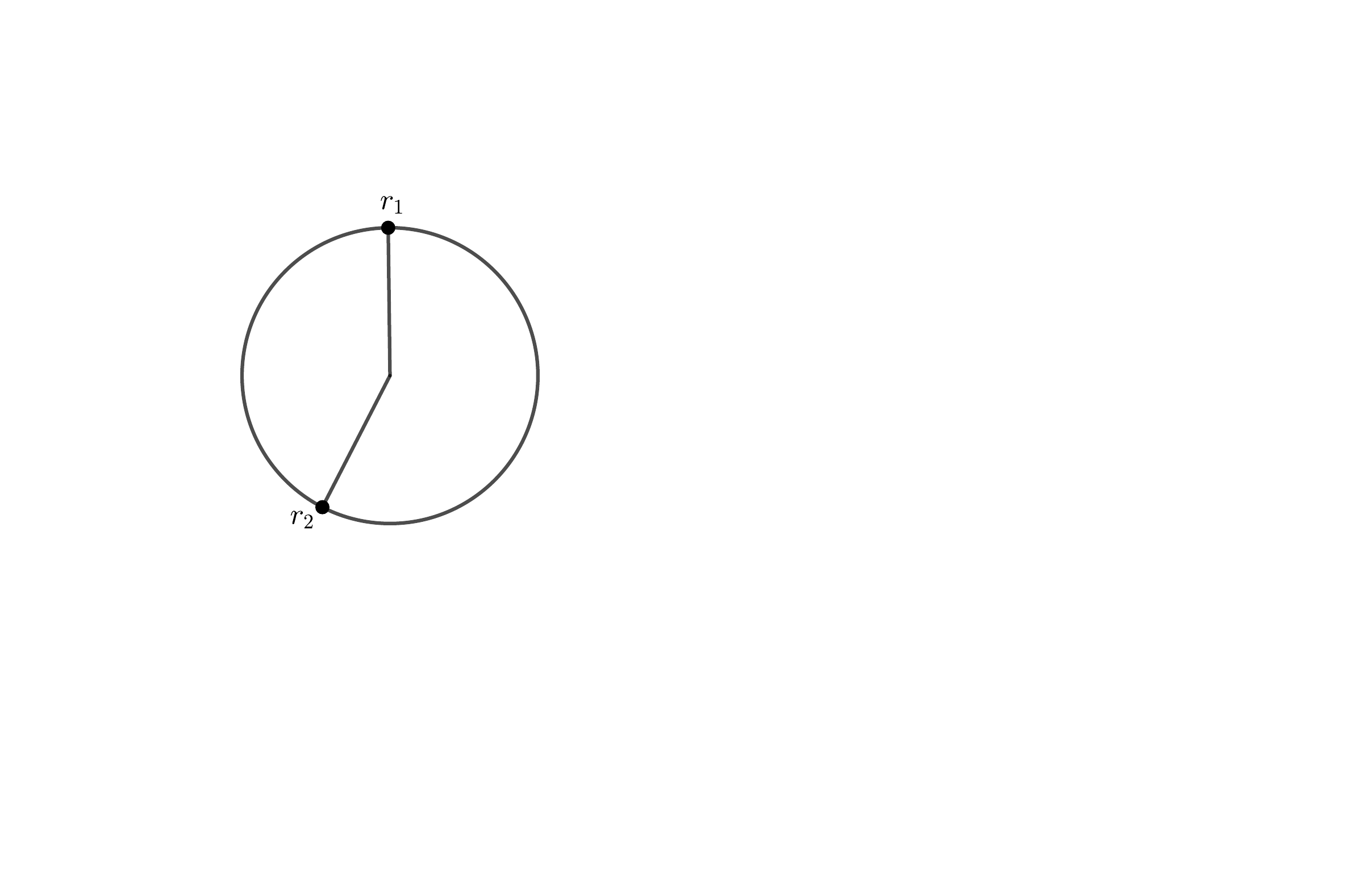}
     \caption{Final configuration}
     \label{fig:91}
\end{subfigure}
\caption{Configuration related to problem \textsc{3to7}.}
\label{fig:3to9}
\end{figure}

We show that this problem is solvable in $\mathcal{FSTA}$ under asynchronous scheduler but not solvable in $\mathcal{FCOM}$ under fully asynchronous scheduler. We propose the following Algorithm~\ref{algo:3to9} to solve the problem with two robots having two states \texttt{off} and \texttt{done}.

\begin{algorithm}
    \If{$state.r$ = \texttt{off}}
    {
        \eIf{ there is a robot $r'$ visible and $\alpha(r,r')=3\pi/2$}
        {
            $state.r\leftarrow$ \texttt{done}\;
            Move $2\pi/3$ angular distance clockwise\;
        }
        {
	Do nothing\;
        }
    }
    \If{$state.r$ = \texttt{done}}
    {
         \eIf{there is another robot $r'$ visible}
	{
        \If{$\alpha(r,r')\ne5\pi/6$}
	    {
            Move clockwise to make $\alpha(r,r')=5\pi/6$\;
        }
	}
	{
	    Move $\pi/6$ angular distance clockwise\;
	}
    }
    \caption{Executed by robot $r$; state of $r$ initially set to \texttt{off}}
    \label{algo:3to9}
\end{algorithm}

Next, we show that Algorithm~\ref{algo:3to9} solves Problem \textsc{3to7} under fully asynchronous scheduler.

\begin{lemma}\label{lm:xx}
    There is an algorithm that solves the problem \textsc{3to7} with two robots having model $\mathcal{FSTA}$ under fully asynchronous scheduler.
\end{lemma}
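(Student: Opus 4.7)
The plan is to prove the lemma by showing two things: (a) $r_1$ never moves, so it remains at its starting position throughout; (b) $r_2$ eventually reaches the target position $\alpha(r_1,r_2)=7\pi/6$ and then stays there. Because $r_1$ will be shown to be stationary, the asynchronous nature of the scheduler reduces effectively to the adversary's choice of when to activate $r_2$ and how much of each move to allow.

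For (a) I would trace $r_1$'s algorithm: it only does anything nontrivial while in state \texttt{off}, and that branch fires only when $r_1$ sees a robot with $\alpha(r_1,r')=3\pi/2$. Initially $\alpha(r_1,r_2)=\pi/2$, and since $r_2$ ever only moves clockwise, $\alpha(r_1,r_2)$ is monotonically nondecreasing. I would then maintain the invariant $\alpha(r_1,r_2)\in[\pi/2,7\pi/6]$; in particular, the value $3\pi/2$ is never attained, so $r_1$'s guard never fires and $r_1$ stays in \texttt{off} forever. The upper bound in this invariant follows directly from the fact that every intended destination of $r_2$ (the initial $2\pi/3$ move, the ``make $\alpha=5\pi/6$'' move, and the $\pi/6$ move from $\beta=\pi$) coincides with, or lies strictly before, the point where $\alpha(r_1,r_2)=7\pi/6$; non-rigidity can only cause undershoot, never overshoot.

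For (b) let $\beta:=\alpha(r_2,r_1)$, whose target value is $5\pi/6$. On its first activation $r_2$ is in \texttt{off} with $\beta=3\pi/2$, so its guard fires, it turns \texttt{done}, and requests a $2\pi/3$ clockwise move; afterwards $r_2$ is permanently in \texttt{done} and executes the second block. I would verify branch-by-branch the invariant $5\pi/6\le\beta\le 3\pi/2$ at every activation-boundary: the initial $2\pi/3$ request can at worst stop at $\beta=5\pi/6$; the ``make $\alpha=5\pi/6$'' branch aims exactly at the target, so an adversarial stop can only leave $\beta$ in $[5\pi/6,\beta_{\mathrm{prev}}]$; and the $\pi/6$ branch, invoked only from $\beta=\pi$, lands in $[5\pi/6,\pi)$. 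Using the nonnegative potential $\Phi:=\beta-5\pi/6$, I would show that each completed activation cycle either drops $\Phi$ by at least $\delta$ (when the adversary stops the move) or completes the move in full (in particular when the remaining distance is below $\delta$, so $\Phi$ drops to $0$). Hence $r_2$ reaches $\beta=5\pi/6$ in finitely many activations, and once there the \texttt{done} branches evaluate to ``do nothing'' (since $r_1$ is visible and $\alpha=5\pi/6$), so $r_2$ remains at the target.

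The main obstacle is the single antipodal situation $\beta=\pi$, where $r_2$ loses sight of $r_1$ and the visible-$r'$ branches of the \texttt{done} block become unavailable. The dedicated ``move $\pi/6$ clockwise'' branch is precisely calibrated to escape this state, since $\pi-\pi/6=5\pi/6$: it either completes and lands $r_2$ exactly at the target, or is stopped early, leaving $\beta\in[5\pi/6,\pi-\delta]$, where $r_1$ becomes visible again. Moreover, since every subsequent move is clockwise and starts from $\beta<\pi$, the antipodal state is entered at most once per execution, so the $\pi/6$-branch is invoked at most once. Combining this with the progress argument above closes the proof.
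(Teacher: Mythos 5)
Your proposal is correct and follows essentially the same route as the paper's (much terser) proof: $r_1$ never moves because $\alpha(r_1,r_2)$ stays in $[\pi/2,7\pi/6]$ and never hits $3\pi/2$, while $r_2$'s progress is handled by the visible-$r_1$ branch and the dedicated antipodal $\pi/6$ branch until $\alpha(r_2,r_1)=5\pi/6$. Your added invariants and the potential-function/$\delta$-progress argument simply make explicit the termination and no-overshoot claims that the paper leaves implicit.
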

\begin{proof}
    Since the robots agree on the clockwise direction, if robots wake up and see the initial configuration, the robots can distinguish whether to move or not. On first time waking up, according to Algorithm~\ref{algo:3to9}, $r_2$ decides to move $2\pi/3$ angle clockwise. If it stopped in the middle then since its state is now \texttt{done}, either on next time activation $r_2$ can see $r_1$ (taken care in line~8) or will be at the antipodal position of $r_1$ (taken care in line~12). Eventually $r_2$ moves to attain $\alpha(r_1,r_2)=7\pi/6$.
    For the robot $r_1$, its state remains \texttt{off} because $r_2$ never moves at the point when $\alpha(r_1,r_2)=3\pi/2$. So, $r_1$ never moves as well.
\end{proof}

In the next Lemma~\ref{lm:x}, we show that the problem \textsc{3to7} is not solvable by robots with $\mathcal{FCOM}$ model even under fully synchronous scheduler.

\begin{lemma}\label{lm:x}
    There is no algorithm that solves the problem \textsc{3to7} with two robots having model $\mathcal{FCOM}$ even under fully synchronous scheduler.
\end{lemma}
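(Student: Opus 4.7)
My plan is to derive a contradiction by showing that any $\mathcal{FCOM}$ algorithm $A$ purporting to solve \textsc{3to7} under a fully synchronous scheduler must fail whenever the two robots become antipodal, and that an adversary can force them into such a configuration. The core observation is as follows: in the $\pi$-visibility model, two robots at angular distance exactly $\pi$ cannot see each other, so both receive the empty snapshot. Because in $\mathcal{FCOM}$ a robot cannot inspect its own light and has no persistent memory, the output of $A$ on the empty snapshot is a single fixed pair $(c^*, m^*)$, identical for both robots. Under the fully synchronous scheduler both robots activate simultaneously, both set their color to $c^*$, and both attempt to move $m^*$ clockwise. The adversary then chooses equal actual displacements for the two robots (either both $0$, or any common value in $[\min(\delta, m^*), m^*]$), so the antipodal relationship is preserved at every subsequent activation. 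In particular, $\alpha(r_1, r_2) = 7\pi/6$ is never attained, and if $m^* > 0$ the algorithm actually has $r_1$ move, which is itself a violation of the problem's requirement.

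It then remains to exhibit an adversary that causes $r_1$ and $r_2$ to be antipodal at some activation. I would pick $\delta = \pi/(2N)$ for a positive integer $N$ chosen large enough that $\delta$ is strictly smaller than every positive step-size $d_k$ that $A$ prescribes for $r_2$ along the induced execution. Such a choice exists because correctness of $A$ forces termination in finitely many rounds (with $r_2$ reaching the target), and in each non-terminal round $r_2$ must take a strictly positive step (else it stalls and never reaches the target), so the multiset of positive step sizes is finite and has a positive infimum. The adversary then caps $r_2$'s actual motion at exactly $\delta$ in every round; after $N$ rounds the cumulative clockwise displacement of $r_2$ is exactly $N\delta = \pi/2$, placing $r_2$ at the antipodal position of $r_1$. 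The antipodal deadlock of the first paragraph then triggers, giving the desired contradiction.

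The chief subtlety, and the place where I expect the proof to be delicate, lies in the circular dependence between the choice of $\delta$ and the step-sizes $d_k$, since the latter are functions of snapshots whose evolution in turn depends on the adversarial capping with rate $\delta$. I plan to handle this by first fixing the capping strategy abstractly and then choosing $N$ large enough relative to the finitely many distinct step sizes that actually arise along the induced trajectory; the identity $\pi/2 = N\delta$ is guaranteed by construction. Should this quantification prove too delicate, a fallback is a continuity / intermediate-value argument across the one-parameter family of executions indexed by $\delta$: since the antipodal-crossing overshoot is always bounded by $\delta$ and therefore tends to $0$ as $\delta \to 0$, one can extract a $\delta$ for which $r_2$ lands on $\pi/2$ exactly, reducing again to the antipodal deadlock.
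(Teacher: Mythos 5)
Your core argument is exactly the paper's: once the two robots are antipodal, neither can see the other, and since $\mathcal{FCOM}$ lights are external and there is no internal memory, both robots compute from the empty snapshot the same pair $(c^*,m^*)$; under \textsc{FSync} with equal adversarial displacements the antipodal relation persists, so either the system deadlocks at $\alpha(r_1,r_2)=\pi\ne 7\pi/6$ or $r_1$ moves, and either outcome contradicts correctness. That part is sound and is precisely what the paper does. The difference is that the paper simply asserts ``the adversary can stop $r_2$ at the antipodal position'' and stops there, whereas you try to justify this reachability step explicitly --- a reasonable instinct, since it is the only place where the non-rigidity hypothesis is actually used.

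However, your specific construction for forcing antipodality has a genuine gap, and it is the one you yourself flag: the requirement that $\delta=\pi/(2N)$ be smaller than every step size $d_k$ arising along the $\delta$-capped execution is circular, because the $d_k$ are functions of snapshots whose evolution depends on the capping rate $\delta$. Neither of your proposed repairs closes this: the ``finitely many step sizes that actually arise'' are not determined until $\delta$ is fixed, and the continuity fallback fails because the computed destination is an arbitrary (in general discontinuous) function of the snapshot, so the overshoot need not vary continuously or hit zero as $\delta$ varies. A clean fix avoids capping altogether: run the algorithm under the rigid adversary (every move completed in full); correctness forces the cumulative clockwise displacement of $r_2$ to exceed $\pi/2$, so there is a first move, say the $k$-th, at whose start the residual distance to the antipodal point is some $\rho$ with $0<\rho\le d_k$. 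Now choose $\delta\le\rho$, replay the same execution (the first $k-1$ moves are unchanged since they were already full moves, hence all snapshots and computed destinations coincide), and stop $r_2$ after exactly $\rho\ge\delta$ during move $k$. This places $r_2$ exactly at the antipodal position with a legal adversary, after which your first paragraph applies. With that substitution your proof is complete and in fact more rigorous than the paper's.
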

\begin{proof}
    If possible, let there be an algorithm $\mathcal{A}$ that solves the problem under model $\mathcal{FCOM}$. To solve the problem the robot $r_2$ will have to cross the antipodal position of $r_1$. Since the robot movements are non-rigid, so the adversary can stop $r_2$ at the antipodal position of $r_1$. Since the lights of the robots are external, the view of both the robots becomes same. From this configuration if $r_2$ moves then $r_1$ will also move. This contradicts the correctness of $\mathcal A$.
\end{proof}

Hence, from Lemma~\ref{lm:xx} and Lemma~\ref{lm:x}, there is a problem that is solvable in $\mathcal{FSTA}$ but not solvable in $\mathcal{FCOM}$ under limited visibility. This, proves that $\mathcal{FCOM}$ is not as powerful as $\mathcal{FSTA}$.

 \section{Conclusion}
This work deals with the gathering of a set of robots with limited visibility deployed over a continuous circle. The robots are autonomous, anonymous, identical, and homogeneous. The robots operate through Look-Compute-Move cycle. Initially robots are are at distinct positions forming a rotationally asymmetric configuration. The robots are required to meet at a point not known beforehand. The robots have $\pi$ visibility model. Each robot cannot see the point at an angular distance $\pi$ from it. This work intends to extend the gathering problem under limited visibility model as full visibility is not practical due to hardware limitations. 

In this work, the robots agree on the clockwise direction. Each robot is equipped with a persistent external light ($\mathcal{FCOM}$) that can take colors from a predefined pallette consisting finitely many colors. The robots are operating under a fully asynchronous scheduler. The robots movements are non-rigid. This is the first work for $\pi$ visibility model under the fully asynchronous scheduler with non-rigid robots' movement. In addition, to validate our contribution, we attach a section showing that the model $\mathcal{FCOM}$ is not as powerful as model $\mathcal{FSTA}$.
As a future work, there are few questions listed below that are still open.
\begin{enumerate}
    \item Is it impossible to gather the robots with $\mathcal{OBLOT}$ in $\pi$ visibility under the fully-asynchronous scheduler?
    \item Is it possible to gather the robots with $\theta (<\pi)$ visibility model?
\end{enumerate}

\paragraph{\bf Acknowledgment}
The first would like to thank the University Grants Commission (UGC), the Government of India for support. The third author is supported by Science and Engineering Research Board (SERB), India.

\bibliographystyle{splncs04}
\bibliography{name}

\end{document}